\documentclass[12pt]{amsart}
\usepackage{amssymb,amscd}
\usepackage{verbatim}

\usepackage[notref,notcite]{showkeys}

\usepackage{amsmath,amssymb,graphicx,mathrsfs} % my new
\usepackage[colorlinks=true,allcolors = blue]{hyperref} % my new

\usepackage{cancel} % gemini asked to add

\let\frak\mathfrak

\def\>{\relax\ifmmode\mskip.666667\thinmuskip\relax\else\kern.111111em\fi}
\def\<{\relax\ifmmode\mskip-.333333\thinmuskip\relax\else\kern-.0555556em\fi}
\def\vsk#1>{\vskip#1\baselineskip}
\def\vv#1>{\vadjust{\vsk#1>}\ignorespaces}
\def\vvn#1>{\vadjust{\nobreak\vsk#1>\nobreak}\ignorespaces}

 \let\tsize\textstyle

\newtheorem{thm}{Theorem}[section]
\newtheorem{cor}[thm]{Corollary}
\newtheorem{lem}[thm]{Lemma}

\theoremstyle{definition} % My June 14 2017
\newtheorem{exmp}{Example}[section]

\numberwithin{equation}{section}

\theoremstyle{definition}
\newtheorem*{rem}{Remark}

\let\mc\mathcal
\let\nc\newcommand

\let\ka\kappa
\let\la\lambda
\let\La\Lambda

\let\phi\varphi
\let\si\sigma

\let\Om\Omega

\let\der\partial

\let\ox\otimes

\let\geq\geqslant
\let\le\leqslant
\let\leq\leqslant

\let\on\operatorname
\let\bi\bibitem
\let\bs\boldsymbol

\def\C{{\mathbb C}}
\def\Z{{\mathbb Z}}

\def\F{{\mathbb F}}

\def\End{\on{End}}

\def\beq{\begin{equation}}
\def\eeq{\end{equation}}
\def\be{\begin{equation*}}
\def\ee{\end{equation*}}

\nc{\bea}{\begin{eqnarray*}}
\nc{\eea}{\end{eqnarray*}}
\nc{\bean}{\begin{eqnarray}}
\nc{\eean}{\end{eqnarray}}
\nc{\Il}{{\mc I_{\bs\la}}}
\nc{\bla}{{\bs\la}}
\nc{\Fla}{\F_\bla}
\nc{\tfl}{{T^*\Fla}}
\nc{\GL}{{GL_n(\C)}}
\nc{\GLC}{{GL_n(\C)\times\C^*}}

\def\KZ/{{\slshape KZ\/}}
\def\qKZ/{{\slshape qKZ\/}}
\def\XXX/{{\slshape XXX\/}}

\def\Sing{{\on{Sing}}}
\def\sll{{\frak{sl}}}

\def\slt{{\frak{sl}_2}}

\def\K{{\mathbb K}}
\def\A{{\mathbb A}}

\def\Sym{{\on{Sym}}}

\begin{document}

\hrule width0pt
\vsk->

\title[Eigenvectors of $p$-Curvature for KZ--type Differential equations]
{Eigenvectors and Eigenvalues of $p$-Curvature
\\
Operators for
KZ--type Differential Connections}

\author[Vitaly Tarasov and Alexander Varchenko]
{Vitaly Tarasov$\>^\circ$ and Alexander Varchenko$\>^\star$}

\maketitle

\begin{center}
{\it $^\circ$Department of Mathematical Sciences,
Indiana University,
402 North Blackford St,
\\
Indianapolis, IN 46202-3216, USA\/}
\vsk.5>
{\it $^{\star}\<$Department of Mathematics, University
of North Carolina at Chapel Hill\\ Chapel Hill, NC 27599-3250, USA\/}
\end{center}

{\let\thefootnote\relax
\footnotetext{\vsk-.8>\noindent
$^\circ\<${\sl E\>-mail}:\enspace vtarasov@iu.edu\>,
supported in part by Simons Foundation grants \rlap{430235, 852996}
\\
$^\star\<${\sl E\>-mail}:\enspace anv@email.unc.edu\>,
supported in part by Simons Foundation grant TSM-00012774}}

\vsk>
{\leftskip3pc \rightskip\leftskip \parindent0pt \Small
{\it Key words\/}:
KZ connection, integral representation, commuting $p$-curvature operators, eigenvectors and eigenvalues
\vsk.6>
{\it 2010 Mathematics Subject Classification\/}: 81R50, 33C70, 33D80
\par}

\begin{abstract}

A KZ-type connection in characteristic $p$ has $p$-curvature operators. These operators are commuting endomorphisms of the connection. We present a Bethe-ansatz-type construction of their eigenvectors and eigenvalues. The eigenvector is constructed as the value of a characteristic $p$ hypergeometric integral and extends to a flat eigensection
over the Frobenius neighborhood of the corresponding Bethe point.

\end{abstract}

{\small\tableofcontents\par}

\setcounter{footnote}{0}
\renewcommand{\thefootnote}{\arabic{footnote}}

\section{Introduction}
\label{sec 1}
A KZ-type flat connection is a basic structure in representation theory, mathematical physics, and the enumerative geometry of Nakajima varieties. When such a connection is reduced to a field of characteristic $p$, it acquires a new family of commuting endomorphisms: the $p$-curvature operators. The purpose of this paper is to diagonalize these operators by developing a characteristic $p$ analogue of the Bethe ansatz method, based on an integral representation of the connection.

\smallskip

The general classical setting over $\C$ is as follows. Let $V$ be a complex vector space, $z=(z_1,\dots,z_n)$
base coordinates, and let $\ka\in\C^\times$. Define a connection $\nabla$ with fiber $V$ by the differential operators
\bea
\nabla_{m} = \frac{\der}{\der{z_m} }- \frac 1\ka H_m\,,\qquad m=1,\dots,n,
\eea
where $(H_m)$ are $(\End V)$-valued functions in $z$.
Assume that this connection is flat for every nonzero $\ka$. A flat section $I(z)$ is a $V$-valued solution of the system of equations
\beq
\label{qqKZ}
\frac{\der I}{\der{z_m}} =\frac 1\ka H_m \,I ,
\qquad m=1,\dots,n.
\eeq
A fundamental problem is to find integral representations for the flat sections, or equivalently, to realize the connection $\nabla$ as a suitable Gauss--Manin connection. The flatness of $\nabla$ implies, in particular, that
\bea
[H_l,H_m]=0
\eea
for all $l$ and $m$, and one naturally seeks the joint eigenvectors and eigenvalues of the commuting operators $(H_m)$.

\smallskip

Equations \eqref{qqKZ} arise in several guises: as KZ equations in conformal field theory and representation theory, and as quantum differential equations in the equivariant cohomology of Nakajima varieties. The commuting operators $(H_m)$ specialize to Gaudin Hamiltonians in the theory of quantum integrable models and to operators of quantum multiplication in that same cohomology theory; see, for instance, \cite{KZ, SV1, EFK, MO, AO}.

\smallskip

Integral representations for flat sections are known both for KZ equations and for the quantum differential equations of Nakajima varieties. Such a representation consists of a scalar (master) function $\Phi(t,z)$ and a $V$-valued (weight) function $W(t,z)$, where $t=(t_1,\dots,t_r)$ is a collection of auxiliary variables. The corresponding multidimensional hypergeometric integrals
\[
I(z) = \int \Phi(t,z)^{1/\ka}\,W(t,z)\,dt
\]
solve the system \eqref{qqKZ}.

The operators $(H_m)$ are classically diagonalized by the Bethe ansatz method as follows. One introduces the logarithmic derivatives
\bean
\label{pp}
\phi_l
=
\frac{\der \Phi}{\der t_l}\Big/\Phi \,,
\qquad
\psi_m
=
\frac{\der \Phi}{\der z_m}\Big/\Phi\,.
\eean
If a point $(t^0, z^0)$ solves the system of Bethe ansatz equations
\bean
\label{1ba}
\phi_l(t,z)=0,
\qquad l=1,\dots,r,
\eean
then the evaluated weight function $W(t^0,z^0)$ is an eigenvector of $H_m(z^0)$ with eigenvalue $\psi_m(t^0,z^0)$:
\bean
\label{HW}
H_m(z^0)\,W(t^0,z^0)=\psi_m(t^0,z^0)\,W(t^0,z^0),
\qquad m=1,\dots,n.
\eean

\smallskip

When moving to a field $\K$ of characteristic $p>0$, the study of linear differential equations and flat connections is governed by the $p$-curvature operators. The $p$-curvature operators, defined by $C_m = (\nabla_m)^p$, measure the obstruction to the existence of horizontal sections. By a fundamental theorem of Cartier, the connection has a full set of flat sections if and only if its $p$-curvature vanishes identically. When the $p$-curvature is nonzero, the operators $(C_m)$ define a purely characteristic $p$ geometric invariant. Since the operators $(C_m)$ strongly commute with the connection and with each other, understanding their simultaneous eigenspaces is a central problem in the geometry of characteristic $p$ connections.

\vsk.2>

In characteristic $p$, the classical analytic picture breaks down: transcendental master functions do not exist, and standard topological integration is unavailable. The primary goal of this paper is to establish an algebraic characteristic $p$ analogue of the multidimensional hypergeometric integral, and to demonstrate that this construction provides exact simultaneous eigenvectors and explicit eigenvalues for the $p$-curvature operators of non-abelian connections.

The core algebraic mechanism replaces classical integration over the auxiliary variables $t_1, \dots, t_r$ with the iterated differential operator $\Delta = \prod_{j=1}^r (\nabla^o_{t_j})^{p-1}$, where the auxiliary rank-one connection $\nabla^o$ is defined by
\bea
\nabla^o_{t_l}
&=&
\der_{t_l} + \frac 1\ka\phi_l(t,z), \qquad l=1,\dots, r,
\\
\nabla^o_{z_m}
&=&
\der_{z_m} + \frac 1\ka \psi_m(t,z), \qquad m=1,\dots, n,
\eea
where $\der_{t_l} := \frac{\der}{\der t_l}$, \, $\der_{z_m} := \frac{\der}{\der z_m}$.
This auxiliary connection encodes the logarithmic derivatives of the classical master function. Since the $(p-1)$-th derivative annihilates all lower-degree terms and isolates the $(p-1)$-th term,\, the operator $\Delta$ acts as an algebraic residue, closely mimicking the behavior of a contour integral.

Denote
\bean
\label{ID}
I = \Delta(W).
\eean
This is a $V$-valued function of $(t,z)$.

In characteristic $p$, the $p$-th powers $(\nabla^o_{t_l})^p$ and $(\nabla^o_{z_m})^p$ act as multiplication by scalar functions, which we denote by $\tilde \phi_l$ and $\tilde \psi_m$, respectively.

In this paper, we show that if $(t^0,z^0)$ satisfies the characteristic $p$ Bethe ansatz system
\bean
\label{gbae}
\tilde \phi_l(t,z)=0,
\qquad l=1,\dots,r,
\eean
then the vector $I(t^0,z^0)$ is an eigenvector of the $p$-curvature operator $C_m(z^0)$ with eigenvalue $-\tilde \psi_m(t^0,z^0)$:
\[
C_m(z^0)\,I(t^0,z^0)=-\tilde \psi_m(t^0,z^0)\,I(t^0,z^0),
\qquad m=1,\dots,n,
\]
see Corollary \ref{cor eig}.

In fact, the geometric statement is much richer. Given a Bethe root $(t^0,z^0)$, consider the ideal
\bea
\mathcal{J}_p = \langle (t_1-t_1^0)^p, \dots, (t_r-t_r^0)^p, (z_1-z_1^0)^p, \dots, (z_n-z_n^0)^p \rangle \ \subset \ \K[t,z]
\eea
and the resulting quotient algebra
\bea
A_{(t^0,z^0)} = \K[t,z]\big/ \mc J_p, \qquad \dim_\K A_{(t^0,z^0)} = p^{r+n}.
\eea
The Artinian local algebra $A_{(t^0,z^0)}$ is called
the Frobenius neighborhood of $(t^0, z^0)$. The connection $\nabla$ induces a connection on $V\ox A_{(t^0,z^0)}$, and the connection $\nabla^o$ induces a connection on $A_{(t^0,z^0)}$. We may therefore regard $I$ as an element of $V\ox A_{(t^0,z^0)}$, that is, as a $V$-valued function on the Frobenius neighborhood of $(t^0,z^0)$.

\vsk.2>
We show that if $(t^0,z^0)$ solves \eqref{gbae}, then the element $I$ is a local flat section over the Frobenius neighborhood of the extended connection:
\bea
\left(\der_{z_m}+ \frac 1\ka\psi_m - \frac 1\ka H_m\right) I
&=&
0,\qquad m=1,\dots, n,
\\
\left(\der_{t_l}+ \frac 1\ka\phi_l\right) I
&=&
0,\qquad l=1,\dots, r.
\eea
Furthermore, the element $I$ is an eigensection of the $p$-curvature operators $(C_m)$ with respective constant eigenvalues $(-\tilde \psi_m(t^0,z^0))$, meaning
\bea
C_m\,I \,=\, -\, \tilde \psi_m(t^0,z^0)\, I, \qquad m=1,\dots,n.
\eea
See Theorem \ref{thm eig}.

\vsk.2>

Formula \eqref{ID} reveals a fundamental departure from the classical theory: the eigenvector $I(t^0,z^0)$ is not simply the evaluation $W(t^0,z^0)$ of the weight function at a Bethe ansatz solution, as it was in \eqref{HW}. Rather, the eigenvector is the evaluated characteristic $p$ integral $\Delta(W)$.

\vsk.2>

Remarkably, the scalar $p$-curvature functions $\tilde \phi_l(t,z)$ and $\tilde \psi_m(t,z)$ satisfy a special symmetry, relating them directly to the classical logarithmic derivatives evaluated at $p$-th powers:
\[
\tilde \phi_l(t,z) = \phi_l(t_1^p,\dots,t_r^p,z_1^p,\dots,z_n^p),
\qquad
\tilde \psi_m(t,z) = \psi_m(t_1^p,\dots,t_r^p,z_1^p,\dots,z_n^p),
\]
see formulas \eqref{kzoo} and \eqref{tikz} for a precise comparison.

\bigskip

Our construction naturally raises the usual Bethe-ansatz-type questions in this new algebraic setting:
\begin{itemize}
\item Is the constructed eigenvector nonzero?
\item Do all eigenvectors of the $p$-curvature operators arise in this way?
\item Can one compute their norms?
\item Are the resulting eigenvectors orthogonal?
\end{itemize}

%% begin edited

\subsection{Motivating Example}
\label{n=3 p=3}\rm

Let \,$n=3$\,. Let \,$V=(M_{\La_1}\ox M_{\La_2}\ox M_{\La_3})[\La_1+\La_2+\La_3-2]$
\,be the three-dimensional weight subspace of the tensor product of three Verma modules
over the Lie algebra \,$\sll_2$ with highest weights $\La_1,\,\La_2$, and $\La_3$, respectively.
 The space $V$ has the basis
\be
f^{(1,0,0)}=f\<\>v_{\La_1}\!\ox v_{\La_2}\!\ox v_{\La_3}\>,\quad
f^{(0,1,0)}=v_{\La_1}\!\ox f\<\>v_{\La_2}\!\ox v_{\La_3}\>,\quad
f^{(0,0,1)}=v_{\La_1}\!\ox v_{\La_2}\!\ox f\<\>v_{\La_3}\>.
\ee
In this basis,
\begin{align*}
\Om^{(1,2)}=\,\Om^{(2,1)}=\begin{pmatrix}
(\La_1-2)\>\La_2/2 & \La_2 & 0\\[4pt]
\La_1 & \La_1\>(\La_2-2)/2 & 0\\[4pt]
0 & 0 & \La_1\<\>\La_2/2\end{pmatrix},
\\[8pt]
\Om^{(1,3)}=\,\Om^{(3,1)}=\begin{pmatrix}
(\La_1-2)\>\La_3/2 & 0 & \La_3 \\[4pt]
0 & \La_1\<\>\La_3/2 & 0\\[4pt]
\La_1 & 0 & \La_1\>(\La_3-2)/2\end{pmatrix},
\\[8pt]
\Om^{(2,3)}=\,\Om^{(3,2}=\begin{pmatrix}
\La_2\<\>\La_3/2 & 0 & 0 \\[4pt]
0 & (\La_2-2)\>\La_3/2 & \La_3\\[4pt]
0 & \La_2 & \La_2\>(\La_3-2)/2\end{pmatrix},
\end{align*}
and the KZ connection is
\be
\nabla_m\>=\,\der_{z_m}\!-\>
\frac 1\ka\,\sum_{j\ne m}\>\frac{\Om^{(m,j)}}{z_m-z_j}\;,\qquad m=1,2,3\,.
\ee
The KZ connection preserves the two-dimensional subspace of singular vectors \;$\Sing\;V$
with a basis
\be
\La_2\<\>f^{(1,0,0)}\<-\La_1\<\>f^{(0,1,0)},\quad
\La_3\<\>f^{(1,0,0)}\<-\La_1\<\>f^{(0,0,1)}.
\ee

\smallskip
\noindent
In this example, \,$t$ is a single variable, and
\begin{align*}
\phi(t;z_1,z_2,z_3)\, &{}=\,\sum_{m=1}^3\,\frac{-\<\>\La_m}{t-z_m}\;,
\\[4pt]
\psi_m(t;z_1,z_2,z_3)\, &{}=\,\frac{-\<\>\La_m}{z_m-t}\>+\>
\sum_{j\ne m}\>\frac{\La_m\<\>\La_j/2}{z_m-z_j}\;,\qquad m=1,2,3\,.
\end{align*}
The $V$-valued weight function is
\be
w(t;z_1,z_2,z_3)\,=\,\frac{f^{(1,0,0)}}{t-z_1}\>+\>
\frac{f^{(0,1,0)}}{t-z_2}\>+\>\frac{f^{(0,0,1)}}{t-z_3}\;.
\ee
Let \,$p=3$\,. Then the \,$p$-curvature operators are
\,$C_m=\>(\nabla_m)^3$\>, \,$m=1,2,3$\,, and
\begin{align*}
\\[-16pt]
\tilde\phi(t;z_1,z_2,z_3)\, &{}=\,\frac 1{\ka^3}\left(
\sum_{m=1}^3\,\frac{-\<\>h_3(\La_m)}{t^3\<-z_m^3}\right),
\\[4pt]
\tilde\psi_m(t;z_1,z_2,z_3)\, &{}=\,\frac 1{\ka^3}\left(
\frac{-\<\>h_3(\La_m)}{z_m^3-t^3}\>+\>
\sum_{j\ne m}\>\frac{h_3(\La_m\<\>\La_j/2)}{z_m^3-z_j^3}\right),\qquad m=1,2,3\,.
\end{align*}
where \,$h_3(y)=y^3\<-\ka^2 y\,$.

\smallskip

Corollaries \ref{cor sing} and \ref{cor ig} in this case read as follows.
Let \,$(t^0; z_1^0,z_2^0,z_3^0)$ \,be a solution to the Bethe ansatz equation
\be
\frac{h_3(\La_1)}{t^3\<-z_1^3}\,+\,\frac{h_3(\La_2)}{t^3\<-z_1^3}\,+\,
\frac{h_3(\La_3)}{t^3\<-z_1^3}\;=\,0\,.
\ee
Then the vector
\be
I(t^0;z_1^0,z_2^0,z_3^0)\,=\,
\biggl(\!\Bigl(\der_t+\>\frac 1\ka\>\phi(t;z_1,z_2,z_3)\Bigr)^{\!2}\<
w(t;z_1,z_2,z_3)\biggr)\bigg|_{(t;z_1,z_2,z_3)=(t^0; z_1^0,z_2^0,z_3^0)}
\ee
belongs to the subspace of singular vectors \;$\Sing\;V$ and
is an eigenvector of the \,$p$-curvature operators
\,$C_m(z_1^0,z_2^0,z_3^0)$ \,with respective eigenvalues
\be
-\>\frac 1{\ka^3}\left(
\frac{-\<\>h_3(\La_m)}{(z_m^0)^3-(t^0)^3}\>+\>
\sum_{j\ne m}\>\frac{h_3(\La_m\<\>\La_j/2)}{(z_m^0)^3-(z_j^0)^3}\right),\qquad m=1,2,3\,.
\ee

\smallskip
%% end edited

The paper is organized as follows.
In Section \ref{sec 2}, we introduce the notion of an integral representation for a flat connection in characteristic $p$. We develop a construction of eigenvectors and eigenvalues of the $p$-curvature operators of a flat connection admitting an integral representation.

In Sections \ref{sec 3}--\ref{sec 5}, we apply the general construction of Section \ref{sec 2} to
particular KZ-type differential connections. To keep the notation minimal and concrete, we focus on two primary examples. The first is the rational KZ connection associated with a tensor product of Verma modules over $\mathfrak{sl}_2$ in characteristic $p$, see Section \ref{sec 3}. The second is the compatible system of KZ and dynamical
connections
associated with a tensor product
of Verma modules over $\mathfrak{sl}_2$ in characteristic $p$, see Section \ref{sec 5}.

As an illustration, in Section \ref{sec 4} we consider a further special case of the KZ connection of Section \ref{sec 3}, where the $\mathfrak{sl}_2$ weights, as well as $\ka$, are restricted to the prime field $\F_p$. We show that in this case the system of Bethe equations \eqref{gbae} is trivial (every point $(t^0,z^0)$ satisfies the system), and our integration method globally produces polynomial flat sections of the KZ connection. In this specialized setting, our construction shares a similar flavor with the construction of polynomial solutions to the KZ and qKZ equations in characteristic $p$ presented in \cite{SV2, MV1, EV1, EV2}.

\smallskip

This paper is related to \cite{TV1, TV2}.
In \cite{TV1}, a qKZ-type discrete flat connection with multiplicative step $q$ is considered. The connection has monodromy operators when $q$ is a root of unity. The monodromy operators are commuting endomorphisms of the
discrete
connection. The eigensections and eigenvalues of the monodromy operators are constructed. The construction is based on the presentation of the discrete flat connection as a discrete Gauss--Manin connection.

In \cite{TV2}, a qKZ-type additive discrete flat connection in characteristic $p$ is considered. The connection has $p$-curvature operators, which are commuting endomorphisms of the discrete
connection. A Bethe-ansatz-type construction of eigensections and eigenvalues of the $p$-curvature operators is presented. An eigensection of the $p$-curvature operators is a discrete hypergeometric sum over the finite lattice $\Z^r/p\Z^r$. Thus the $p$-curvature eigensections are constructed by a finite, discrete analogue of a hypergeometric integral.

\smallskip

\noindent\textbf{Acknowledgments.}
The second author thanks IH\'ES for its hospitality during May--June 2026, when this paper was developed.

The authors thank P.\,Etingof for useful discussions.

\section{Eigenvectors of $p$-curvature operators}\label{sec 2}

\subsection{Flat connection}
\label{sec 2.1}

Let $p$ be an odd prime, $n$ a positive integer.
Let $\K$ be a field of characteristic $p$ and $V$ a finite-dimensional vector space over $\K$.
Let $\A^n$ be the $n$-dimensional affine space over $\K$ with coordinates $z=(z_1,\dots,z_n)$.
Let $H_m(z)$ for $m=1,\dots,n$ be $\End V$-valued rational functions
in $z$. Consider the connection $\nabla$ on the trivial bundle over $\A^n$ with fiber $V$ defined by
\bea
\nabla_m = \der_{z_m} + H_m(z), \qquad m=1,\dots,n \,,
\eea
where $\der_{z_m} = \frac{\der}{\der z_m}$. The connection has singularities at the poles of the functions
$(H_m)$.
Assume that $\nabla$ is flat, that is, $[\nabla_m,\nabla_l]=0$.
\vsk.2>

The $p$-curvature operators of the connection $\nabla$ are defined by
\bea
C_m = \left(\nabla_m\right)^p, \qquad m=1,\dots,n.
\eea
Each $C_m$ defines an endomorphism of the connection, that is, $[C_m,\nabla_l] = 0$ for
$l=1, \dots, n$. The action of $C_m$ on sections of the bundle commutes with multiplication of sections by functions on
the base. The operator $C_m$ is an $\End V$-valued rational function in $z$.

If $s $ is a flat section of $\nabla$, then $s$ is annihilated by the $p$-curvature operators,
that is, $C_m s=0$ for all $m$.

\begin{exmp}
Let $n=1$, $V=\K$, $\nabla =\der_x + f(x)$ where $f(x)\in \K(x)$. Then
\bean
\label{ex1}
(\der_x + f)^p = f^p+f^{(p-1)}.
\eean
\end{exmp}

\begin{exmp}
\label{ex2}

Let $n=1$, $f(x)\in\K(x)$, $H(x) \in (\End V)(x)$.
Consider the three connections:
\bea
\nabla_f := \der_x + f(x),
\qquad
\nabla_H := \der_x + H(x),
\qquad
\nabla_{f\on{Id}_V+H} := \der_x + f(x)\on{Id}_V +H\,.
\eea
Then the corresponding $p$-curvatures satisfy the relation:
\bean
\label{C-rel}
C_{f\on{Id}_V+H} =C_f \on{Id}_V + C_H\,.
\eean
Indeed, the third connection $\nabla_{f\on{Id}_V+H}$ is identified with the tensor product of the first two connections:
$\nabla_f \ox \on{Id}_V + \on{Id}_\K\ox \nabla_H$. Then the $p$-curvature of the tensor-product connection
equals $C_f \ox \on{Id}_V + \on{Id}_\K\ox C_H$, which gives \eqref{C-rel}.

\end{exmp}

Proofs of these and other properties of $p$-curvature operators can be found in \cite{K1, K2};
also see, for example, \cite{EV2}.

\subsection{Auxiliary flat connection of rank one}
\label{sec 2.2}

Let $r$ be a positive integer. Let $\A^{r+n}$ be the affine space over $\K$ of dimension $r+n$ with coordinates
$(t, z)$, where $t=(t_1,\dots, t_r)$.
Let $\phi_l(t,z)$ for $l=1,\dots,r$ and $\psi_m(t,z)$ for $m=1,\dots,n$
be elements of $\K(t,z)$, that is, scalar rational functions in $(t,z)$. Define a rank-one connection
$\nabla^o$ on $\A^{r+n}$ by the formulas:
\bea
\nabla^o_{t_l}
&=&
\der_{t_l} + \phi_l(t,z), \qquad l=1,\dots, r,
\\
\nabla^o_{z_m}
&=&
\der_{z_m} + \psi_m(t,z), \qquad m=1,\dots, n.
\eea
The connection $\nabla^o$ has singularities at the poles of the functions $(\phi_l,\psi_m)$.
Assume that $\nabla^o$ is flat.

\begin{lem}
\label{lem: tilde}
The $p$-curvature operators of $\nabla^o$ are given by the formulas:
\bean
\label{p l}
(\nabla_{t_l}^o)^p
&=&
\tilde \phi_l,\qquad l=1,\dots,r,
\\
\label{p m}
(\nabla_{z_m}^o)^p
&=&
\tilde \psi_m, \qquad m=1,\dots,n,
\eean
where
\bean
\label{f l}
\tilde \phi_l
&=&
\phi_l^p + \der_{t_l}^{p-1}(\phi_l),
\\
\label{f m}
\tilde \psi_m
&=&
\psi_m^p + \der_{z_m}^{p-1}(\psi_m).
\eean
Furthermore, each of the functions $(\phi_l, \psi_m)$ is a function of
$(t^p,z^p) := (t_1^p,\dots, t_r^p, z_1^p, \dots, z_n^p)$.

\end{lem}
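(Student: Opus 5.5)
The plan is to treat the two parts of the lemma separately: first the explicit formulas for the $p$-th powers, then the claim that the resulting functions depend only on $(t^p,z^p)$. I read the last sentence of the lemma as referring to $\tilde\phi_l,\tilde\psi_m$. The functions $\phi_l,\psi_m$ themselves are generally not of this form, as the motivating example shows.

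\emph{Formulas (\ref{p l})--(\ref{f m}).} Fix one coordinate, say $t_l$, and regard all the other variables as constants. Then $\nabla^o_{t_l}=\der_{t_l}+\phi_l$ is a one-variable rank-one connection over the field $\K(t_1,\dots,\widehat{t_l},\dots,t_r,z)$, which has characteristic $p$. The identity (\ref{ex1}) gives $(\der_x+f)^p=f^p+f^{(p-1)}$. It is a purely formal consequence of the Leibniz rule in characteristic $p$, so it holds over any differential field of characteristic $p$. Applying it with $x=t_l$ and $f=\phi_l$ yields (\ref{p l}) and (\ref{f l}), and the same argument with $x=z_m$, $f=\psi_m$ yields (\ref{p m}) and (\ref{f m}). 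In particular each $p$-th power is the operator of multiplication by a rational function.

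\emph{Dependence on $(t^p,z^p)$.} Let $D$ be any one of the operators $\nabla^o_{t_j}$ or $\nabla^o_{z_k}$, and write $D=\der_x+g$. Flatness of $\nabla^o$ says that $D$ commutes with $\nabla^o_{t_l}$, so $D$ also commutes with $(\nabla^o_{t_l})^p=\tilde\phi_l$, regarded as a multiplication operator. On the other hand, $[\der_x+g,\tilde\phi_l]$ is the operator of multiplication by $\der_x\tilde\phi_l$. Hence $\der_x\tilde\phi_l=0$ for every coordinate $x$ among $t_1,\dots,t_r,z_1,\dots,z_n$, and the same holds for $\tilde\psi_m$.

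It remains to show that a rational function annihilated by all the partial derivatives lies in $\K(t^p,z^p)$. For this I would use that $\K(t,z)$ is a free module over $\K(t^p,z^p)$ with basis the monomials $t^a z^b$ with all exponents in $\{0,\dots,p-1\}$. Each $\der_x$ is $\K(t^p,z^p)$-linear and sends $x^{c}$ to $c\,x^{c-1}$, where $c\neq 0$ in $\K$ for $0<c<p$. So if $\der_x$ kills an element, every basis monomial in its expansion has exponent $0$ in $x$. Doing this for all coordinates leaves only the constant monomial, so the element lies in $\K(t^p,z^p)$.

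The only step needing slight care is this last kernel computation, in particular the freeness of $\K(t,z)$ over $\K(t^p,z^p)$. I would obtain it by clearing denominators: write $f=g/h$ and multiply numerator and denominator by $h^{p-1}$, which turns the denominator into $h^p\in\K[t^p,z^p]$. The numerator is a polynomial and expands directly in the monomial basis with coefficients in $\K[t^p,z^p]$.
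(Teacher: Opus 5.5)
Your proposal is correct and follows the paper's proof: you get the formulas from \eqref{ex1} one variable at a time, then use flatness to get $[\nabla^o_{x},\tilde\phi_l]=\der_x\tilde\phi_l=0$ (and likewise for $\tilde\psi_m$). Beyond the paper, you supply the kernel-of-derivations argument it leaves implicit, and you correctly read the last sentence as a statement about $\tilde\phi_l,\tilde\psi_m$. One small imprecision: \eqref{ex1} does not hold for an arbitrary derivation in characteristic $p$, since the general formula carries an extra $\der^p$ term; it does hold for $\der_x$ on $F(x)$, where $\der_x^p=0$, and that is the only case you use.
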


\begin{proof}
Formulas \eqref{f l} and \eqref{f m} follow from \eqref{ex1}.

For any $l$ and $ m $, on the one hand, we have
\bea
[\nabla^o_{t_l}, (\nabla^o_{z_m})^p]=0,
\eea
and, on the other hand,
\bea
[\nabla^o_{t_l}, \tilde \psi_m] = \der_{t_l}\tilde \psi_m\,.
\eea
Hence $\der_{t_l}\tilde \psi_m=0$, and $\tilde \psi_m$, as a function of $t_l$, is a function of $t_l^p$.
The remainder of the lemma is proved similarly.
\end{proof}

The Bethe ansatz equations is the system of equations
\bean
\label{bae}
\tilde \phi_i(t,z) = 0, \qquad i=1,\dots,r.
\eean

\vsk.2>

Given $(t^0,z^0)$, consider the ideal
\bean
\label{Jp}
\mathcal{J}_p = \langle (t_1-t_1^0)^p, \dots, (t_r-t_r^0)^p, (z_1-z_1^0)^p, \dots, (z_n-z_n^0)^p \rangle \ \subset \ \K[t,z]
\eean
and the quotient
\bea
A_{(t^0,z^0)} = \K[t,z]\big/ \mc J_p, \qquad \dim_\K A_{(t^0,z^0)} = p^{r+n}.
\eea
The local algebra $A_{(t^0,z^0)}$ is called the Frobenius neighborhood of $(t^0, z^0)$.
The operators $\der_{t_l}$ and $\der_ {z_m}$ are well-defined on $A_{(t^0,z^0)}$.
The connection $\nabla$ induces a connection on $V\ox A_{(t^0,z^0)}$,
and the connection $\nabla^o$ induces a connection on $A_{(t^0,z^0)}$.
We retain the notation $\nabla$ and $\nabla^o$ for the induced connections. The induced connections remain flat.

\vsk.2>
Denote
\bean
\label{delta}
\Delta = \prod_{j=1}^r \left(\nabla^o_{t_j}\right)^{p-1} .
\eean
We have the following useful observation.

\begin{lem}
\label{lem Sto}
Let $(t^0,z^0)$ be a solution to the system \eqref{bae}.
Let $1\leq i\leq r$ and $g\in A_{(t^0,z^0)}$. Then
$\Delta \,\nabla^o_{t_i} (g)$
is the zero element of $A_{(t^0,z^0)}$.
\end{lem}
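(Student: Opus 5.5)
We use flatness of $\nabla^o$ to reorder the factors of $\Delta$, and then the Bethe ansatz equations to kill the resulting $p$-th power. Since $\nabla^o$ is flat, the operators $\nabla^o_{t_j}$ commute pairwise on $A_{(t^0,z^0)}$. We can therefore move the factor $(\nabla^o_{t_i})^{p-1}$ of $\Delta$ to the right end:
\be
\Delta\,\nabla^o_{t_i}(g)\,=\,\Bigl(\prod_{j\ne i}(\nabla^o_{t_j})^{p-1}\Bigr)\,(\nabla^o_{t_i})^{p}(g)\,.
\ee
By Lemma \ref{lem: tilde}, $(\nabla^o_{t_i})^p$ acts as multiplication by $\tilde\phi_i$. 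The statement thus reduces to showing that $\tilde\phi_i\, g$ is the zero element of $A_{(t^0,z^0)}$; applying the remaining operators to zero then gives zero.

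To show $\tilde\phi_i g=0$, we use the second part of Lemma \ref{lem: tilde}: $\tilde\phi_i$ is a rational function of $(t^p,z^p)$, say $\tilde\phi_i(t,z)=F(t^p,z^p)$. In $A_{(t^0,z^0)}$ we have $t_l^p=(t_l-t_l^0)^p+(t_l^0)^p=(t_l^0)^p$, and likewise $z_m^p=(z_m^0)^p$. Hence every polynomial in $(t^p,z^p)$ reduces to the constant $F((t^0)^p,(z^0)^p)=\tilde\phi_i(t^0,z^0)$. The Bethe ansatz equation \eqref{bae} says this constant is zero.

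The only point needing care is that $\tilde\phi_i$ is rational rather than polynomial, so we must say what it means in $A_{(t^0,z^0)}$. We assume, as is implicit in the statement that $\nabla^o$ induces a connection on $A_{(t^0,z^0)}$, that $(t^0,z^0)$ is not a pole of the $\phi_l,\psi_m$. Then the denominators are invertible in the local Artinian ring $A_{(t^0,z^0)}$, because their values at the point are nonzero. Write $\tilde\phi_i=P/Q$ with $P,Q$ polynomials in $(t^p,z^p)$, which is possible after multiplying numerator and denominator by a suitable power. Then $P$ and $Q$ reduce to the constants $P(t^0,z^0)$ and $Q(t^0,z^0)\neq0$. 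So $\tilde\phi_i$ reduces to the scalar $\tilde\phi_i(t^0,z^0)=0$, and therefore $\tilde\phi_i\, g=0$ in $A_{(t^0,z^0)}$, which completes the argument.
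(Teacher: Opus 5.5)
Your proposal is correct and follows the paper's proof: you use commutativity of the $\nabla^o_{t_j}$ to obtain $(\nabla^o_{t_i})^p=\tilde\phi_i$, and then use that $\tilde\phi_i$ is a function of $(t^p,z^p)$, so it reduces to the constant $\tilde\phi_i(t^0,z^0)=0$ in $A_{(t^0,z^0)}$. The differences are cosmetic: you let $(\nabla^o_{t_i})^p$ act on $g$ first rather than as a multiplier at the end, and you make explicit the regularity and invertibility of denominators in the local ring, which the paper leaves implicit.
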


\begin{proof}
We have
\bea
\Delta \, \nabla^o_{t_i} (g)
=\left(\nabla^o_{t_i}\right)^p
\left(\prod_{j\ne i}^r \left(\nabla^o_{t_j}\right)^{p-1}\right) (g)
= \tilde \phi_i \cdot
\left(\prod_{j\ne i}^r \left(\nabla^o_{t_j}\right)^{p-1}\right) (g).
\eea
By Lemma \ref{lem: tilde}, $\tilde \phi_i$ is a function of $(t^p,z^p)$. Because we are operating in the quotient $A_{(t^0,z^0)}$ defined by $\mc J_p$, the $p$-th powers of the variables are constant: $t_l^p \equiv (t_l^0)^p$ and $z_m^p \equiv (z_m^0)^p$. Thus, $\tilde \phi_i$ reduces to the constant $\tilde \phi_i(t^0,z^0)$, which is zero by the assumption that $(t^0,z^0)$ solves the Bethe ansatz equations. Therefore, $\tilde \phi_i$ induces the zero element in $A_{(t^0,z^0)}$.
\end{proof}

\subsection{Integral representation for $\nabla$}
\label{sec 2.3}

Let the $\End V$-valued functions $(H_m)$ define a flat connection
$\nabla$ as in Section \ref{sec 2.1}. Let the scalar functions
$(\phi_l, \psi_m)$ define a rank-one flat connection $\nabla^o$ as in Section \ref{sec 2.2}.

Define a new connection, called the {\it extended connection}, with base $\A^{r+n}$ and fiber $V$ by the following differential operators:
\bean
\label{ncon}
&&
\nabla^o_{t_l}, \qquad l=1,\dots,r,
\\
&&
\nabla_{z_m}^o+H_m, \qquad m=1,\dots,n.
\eean
This connection is flat.

\vsk.2>

Let $w$ and $g_{l,m}$ for $l=1, \dots, r$ and $m=1,\dots,n$, be $V$-valued rational functions in
$(t,z)$ such that
\bean
\label{IR}
\left(\nabla^o_{z_m} + H_m\right)\!(w) = \sum_{l=1}^r \nabla^o_{t_l}\left(g_{l,m}\right), \qquad m=1,\dots,n.
\eean
Such a
collection of functions is called an {\it integral representation} for the flat connection $\nabla$.
The function $w$ is called the {\it weight function.}

\vsk.2>

Given $(t^0,z^0)$, define an element $I\in V \ox A_{(t^0,z^0)}$
by the formula
\bean
\label{elt I}
I = \Delta(w).
\eean

\begin{thm}
\label{thm eig}

Let $(t^0, z^0)$ be a solution to the system of Bethe ansatz equations:
\bean
\label{BEA}
\tilde \phi_l(t,z) = 0,
\qquad l=1,\dots,r.
\eean
Then the element $I \in V\ox A_{(t^0,z^0)}$ is a flat section of the extended connection. That is,
\bean
\label{fecl}
\left(\der_{z_m}+ \psi_m + H_m\right) I
&=&
0,\qquad m=1,\dots, n,
\\
\label{fecm}
\left(\der_{t_l}+ \phi_l\right) I
&=&
0,\qquad l=1,\dots, r.
\eean
Furthermore, the element $I $ is an eigensection of the $p$-curvature operators $(C_m)$
with respective eigenvalues $(-\tilde \psi_m(t^0,z^0))$, that is,
\beq
\label{eigv}
C_m\,I \,=\, -\, \tilde \psi_m(t^0,z^0)\, I, \qquad m=1,\dots,n.
\eeq

\end{thm}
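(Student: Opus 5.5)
The plan is to verify the two flatness identities separately, using the flatness of the extended connection together with Lemma~\ref{lem Sto}. After that, the eigenvalue statement follows from the tensor-product rule \eqref{C-rel} of Example~\ref{ex2}.

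\emph{Step 1: the $t$-equations \eqref{fecm}.} Since $\nabla^o$ is flat, the operators $\nabla^o_{t_l}$ pairwise commute, so we may write
\be
\nabla^o_{t_l}\,\Delta = \Delta\,\nabla^o_{t_l}
\ee
acting on $V\ox A_{(t^0,z^0)}$. Then
\be
\nabla^o_{t_l} I = \Delta\,\nabla^o_{t_l}(w).
\ee
This vanishes by Lemma~\ref{lem Sto}, applied componentwise in a basis of $V$. Equivalently, reordering gives $(\nabla^o_{t_l})^p=\tilde\phi_l$, which is the constant $\tilde\phi_l(t^0,z^0)=0$ on the Frobenius neighborhood.

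\emph{Step 2: the $z$-equations \eqref{fecl}.} Flatness of the extended connection \eqref{ncon} gives
\be
[\nabla^o_{z_m}+H_m,\nabla^o_{t_l}]=0.
\ee
Since $H_m$ depends only on $z$, it commutes with $\nabla^o_{t_l}$. Hence $\nabla^o_{z_m}+H_m$ commutes with $\Delta$, and
\be
(\nabla^o_{z_m}+H_m)I=\Delta(\nabla^o_{z_m}+H_m)(w)=\sum_l \Delta\,\nabla^o_{t_l}(g_{l,m})
\ee
by the integral representation \eqref{IR}. Each summand vanishes by Lemma~\ref{lem Sto}.

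The only point I expect to need care is that $w$, $g_{l,m}$ and the coefficients $\phi_l,\psi_m,H_m$ are rational functions, so they must define elements of $A_{(t^0,z^0)}$ or $V\ox A_{(t^0,z^0)}$. This requires $(t^0,z^0)$ to avoid their poles; I would assume this implicitly, as the theorem does. Under that assumption, localization at the point maps into the Artinian quotient compatibly with all derivations, so the commutation relations above survive passage to $A_{(t^0,z^0)}$.

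\emph{Step 3: eigenvalues.} Fix $m$ and view the operator as a one-variable connection in $z_m$ with fiber $V\ox A$, namely
\be
\der_{z_m}+\psi_m\on{Id}_V+H_m.
\ee
By Example~\ref{ex2}, its $p$-th power equals $\tilde\psi_m\on{Id}_V+C_m$. Because $I$ is annihilated by this operator, it is annihilated by the operator's $p$-th power, so
\be
C_m I=-\tilde\psi_m I.
\ee
By Lemma~\ref{lem: tilde}, $\tilde\psi_m$ is a function of $(t^p,z^p)$, hence equals the constant $\tilde\psi_m(t^0,z^0)$ in $A_{(t^0,z^0)}$, which gives \eqref{eigv}. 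Here $C_m$ is the $p$-curvature of $\der_{z_m}+H_m$. The sign convention in this section is $\nabla_m=\der_{z_m}+H_m$, which is consistent with this.

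I expect the main obstacle to be purely bookkeeping: justifying that Example~\ref{ex2} and the commutation identities hold over the ring $A_{(t^0,z^0)}$ with the extra variables $t$ and $z_l$, $l\ne m$, treated as parameters. This is fine because $\der_{z_m}$ is a derivation of $A_{(t^0,z^0)}$, and the identity $(\der+f)^p=f^p+f^{(p-1)}$ (Jacobson's formula) holds for any commutative algebra with a derivation in characteristic $p$.
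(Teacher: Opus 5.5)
Your proposal is correct and follows the paper's proof essentially step for step. You commute $\Delta$ past $\nabla^o_{t_l}$ and past $\nabla^o_{z_m}+H_m$ to expose a factor $(\nabla^o_{t_l})^p=\tilde\phi_l$, which vanishes on the Frobenius neighborhood; you then use $(\der_{z_m}+\psi_m+H_m)^p=C_m+\tilde\psi_m$, which the paper gets by applying the $(p-1)$-st power to \eqref{fecl}, and your remark that $(t^0,z^0)$ must avoid the poles of $w$, $g_{l,m}$, $\phi_l$, $\psi_m$, $H_m$ makes explicit a hypothesis the paper leaves implicit.
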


\begin{proof}

Applying $\Delta$ to both sides of equation \eqref{IR}, we obtain
\bean
\label{2.13}
\left(\der_{z_m}+ \psi_m + H_m\right)I
=
\sum_{l=1}^r \,\tilde \phi_l\cdot \left(\prod_{j\ne l}^r \left(\nabla^o_{t_j}\right)^{p-1}\right) \!
\left(g_{l,m}\right),
\eean
where $ \tilde \phi_l $ acts as the zero element in $A_{(t^0,z^0)}$. Formula \eqref{fecl} is proved.

We have
\bea
\nabla^o_{t_l} I = \left(\nabla^o_{t_l}\right)^p \left(\prod_{j\ne l} \left(\nabla^o_{t_j}\right)^{p-1}\right) (w)
=\tilde \phi_l \cdot \left(\prod_{j\ne l} \left(\nabla^o_{t_j}\right)^{p-1}\right) \!(w).
\eea
This element is zero in $V \ox A_{(t^0,z^0)}$ since $\tilde \phi_l$ is zero in $A_{(t^0,z^0)}$.
This proves formula \eqref{fecm}.

Applying $\left(\der_{z_m}+ \psi_m + H_m\right) ^{p-1}$ to formula \eqref{fecl} yields
\bea
0=\left(\der_{z_m}+ \psi_m + H_m\right)^p I =
\left(C_m + \tilde \psi_m\right )I,
\eea
where $\tilde \psi_m$ equals the constant $\tilde \psi_m(t^0,z^0)$ in $A_{(t^0,z^0)}$ since
$\tilde \psi_m$ is a function of $(t^p,z^p)$. This proves formula \eqref{eigv}. Theorem \ref{thm eig} is proved.
\end{proof}

\begin{cor}
\label{cor eig}

Let $(t^0, z^0)$ be a solution to the system of Bethe ansatz equations \eqref{BEA}.
Then the vector $I(t^0,z^0) \in V$ is an eigenvector of the $p$-curvature operators $(C_m(z^0))$
with respective eigenvalues $(-\tilde \psi_m(t^0,z^0))$, that is,
\bea
C_m(z^0)\,I(t^0,z^0) \,=\, -\, \tilde \psi_m(t^0,z^0)\, I(t^0,z^0), \qquad m=1,\dots,n.
\eea

\end{cor}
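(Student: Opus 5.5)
The corollary should follow from Theorem \ref{thm eig} by evaluating at the closed point $(t^0,z^0)$. The plan is to use the evaluation homomorphism $\on{ev}\colon A_{(t^0,z^0)}\to\K$, sending $t_l\mapsto t_l^0$ and $z_m\mapsto z_m^0$. It is well defined because each generator $(t_l-t_l^0)^p$, $(z_m-z_m^0)^p$ of $\mc J_p$ vanishes at $(t^0,z^0)$. Tensoring with $V$ gives $\on{ev}\colon V\ox A_{(t^0,z^0)}\to V$. By definition, $I(t^0,z^0)$ is the image under $\on{ev}$ of $I=\Delta(w)$, computed as a rational function and then restricted to the Frobenius neighborhood.

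The main step is to check that $\on{ev}$ intertwines the action of $C_m$ on $V\ox A_{(t^0,z^0)}$ with the action of the matrix $C_m(z^0)$ on $V$. As recalled in Section \ref{sec 2.1}, $C_m=(\nabla_m)^p$ commutes with multiplication by functions. Hence $C_m$ is a $\K[t,z]$-linear operator, given by multiplication by an $\End V$-valued rational function $C_m(z)$ with no derivative terms. Its induced action on $V\ox A_{(t^0,z^0)}$ is therefore multiplication by the image of $C_m(z)$ in $(\End V)\ox A_{(t^0,z^0)}$. Since $\on{ev}$ is a ring homomorphism, it sends the product $C_m(z)\cdot I$ to $C_m(z^0)\,I(t^0,z^0)$.

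Similarly, by Lemma \ref{lem: tilde} the scalar $\tilde\psi_m(t^0,z^0)$ is a constant in $\K$, so $\on{ev}$ sends $-\tilde\psi_m(t^0,z^0)\,I$ to $-\tilde\psi_m(t^0,z^0)\,I(t^0,z^0)$. Applying $\on{ev}$ to both sides of \eqref{eigv} then gives the claimed identity.

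The only point needing care is that all rational functions involved, namely $H_m$, $\phi_l$, $\psi_m$, $w$ and $C_m$, are regular at $(t^0,z^0)$. This is needed both for their images in $A_{(t^0,z^0)}$ (via localization at the point) and for $\on{ev}$ to make sense. It is the implicit standing assumption of Theorem \ref{thm eig}, so I would state it explicitly: $(t^0,z^0)$ avoids the singularities.

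Finally, $\on{ev}$ of $\Delta(w)$ computed in $A_{(t^0,z^0)}$ agrees with the pointwise value of the rational function $\Delta(w)$ at $(t^0,z^0)$. This holds because differentiation commutes with the quotient map to $A_{(t^0,z^0)}$, and that quotient map is a differential homomorphism since $\der(x-x^0)^p=0$.
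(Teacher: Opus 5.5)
Your proposal is correct and follows the paper's route. The paper states the corollary without a separate proof, as the specialization of identity \eqref{eigv} of Theorem \ref{thm eig} at the point $(t^0,z^0)$. You make that specialization explicit via the evaluation homomorphism $A_{(t^0,z^0)}\to\K$ and the $\K[t,z]$-linearity of $C_m$. You also spell out the standing assumption, left tacit in the paper, that $(t^0,z^0)$ avoids the singularities of $H_m$, $\phi_l$, $\psi_m$ and $w$.
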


Theorem \ref{thm eig} describes the properties of the element $I\in V\ox A_{(t^0,z^0)}$ if $(t^0,z^0)$ is a solution
to the system of Bethe ansatz equations. Notice that the elements
$I, \phi_l, \psi_m$ in Theorem \ref{thm eig} depend on $t$ and $z$ in the quotient, while $H_m, C_m$ depend on $z$ only.
In particular, we have $C_m(z) I(t,z) = -\tilde \psi_m(t^0,z^0) I(t,z)$.
We show below that the geometric direction of the eigenvector $I(t,z)$ in $V$ does not depend on $t$.

\vsk.2>
More precisely,
let $e_1,\dots,e_d$ be a basis of $V$, and write $I = \sum_{j=1}^d I_j e_j$ where $I_j \in A_{(t^0,z^0)}$. Assume that
$I_{j_0}(t^0,z^0) \ne 0$ for some $j_0$. For any $j$, denote the ratio $I_j/I_{j_0}\in A_{(t^0,z^0)}$ by $R_j$. We may express $R_j$ as a polynomial in the quotient:
\bea
R_j = \sum_{0\leq i_1,\dots, i_r, k_1,\dots,k_n\leq p-1}
c_{i_1,\dots, i_r, k_1,\dots,k_n}
\prod_{l=1}^r (t_l-t_l^0)^{i_l} \prod_{m=1}^n(z_m-z_m^0)^{k_m},
\eea
where $c_{i_1,\dots, i_r, k_1,\dots,k_n} \in \K$.

\begin{lem}
\label{lem dit}

If $i_1+\dots+ i_r>0$, then $c_{i_1,\dots, i_r, k_1,\dots,k_n} = 0$. In other words, the ratio
$R_j$ does not depend on $t$,
and the direction of the element $I$ depends on $z$ only.

\end{lem}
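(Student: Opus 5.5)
The plan is to use the flatness equation \eqref{fecm} of Theorem \ref{thm eig}. It says that each coordinate of $I$ satisfies the same scalar first-order equation in every $t_l$: for all $j$ and $l$ we have
\[
\der_{t_l} I_j = -\phi_l\, I_j \quad\text{in } A_{(t^0,z^0)}.
\]
Here the coordinates $I_j\in A_{(t^0,z^0)}$ are taken with respect to the constant basis $e_1,\dots,e_d$ of $V$, so $\der_{t_l}$ acts coordinatewise. One point needs care. The function $\phi_l$ is rational, so I must make sense of it as an element of $A_{(t^0,z^0)}$. This is legitimate whenever $(t^0,z^0)$ is not a pole, which is implicitly assumed throughout Section \ref{sec 2}, since $\nabla^o$ has already been induced on the Frobenius neighborhood.

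The next step is to verify that $R_j$ is well defined. Because $I_{j_0}(t^0,z^0)\neq 0$ and $A_{(t^0,z^0)}$ is a local algebra whose maximal ideal is generated by $t_l-t_l^0$ and $z_m-z_m^0$, the element $I_{j_0}$ is invertible. Hence $R_j = I_j I_{j_0}^{-1}$ makes sense. The derivations $\der_{t_l}$ satisfy the Leibniz rule on the quotient, since $\der_{t_l}$ preserves $\mathcal J_p$: indeed $\der_{t_l}(t_l-t_l^0)^p = p(t_l-t_l^0)^{p-1}=0$. Applying the quotient rule then gives
\[
\der_{t_l} R_j = \frac{(\der_{t_l} I_j)\, I_{j_0} - I_j\, \der_{t_l} I_{j_0}}{I_{j_0}^2} = \frac{-\phi_l I_j I_{j_0} + \phi_l I_j I_{j_0}}{I_{j_0}^2} = 0 .
\]

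It remains to show that an element of $A_{(t^0,z^0)}$ killed by every $\der_{t_l}$ has no monomials involving $t-t^0$. I would expand $R_j$ in the monomial basis described before the statement. Differentiating with respect to $t_l$ sends the monomial with exponent $i_l$ to $i_l$ times the monomial with $i_l-1$. This map is injective on the basis elements with $1\le i_l\le p-1$, because $i_l\not\equiv 0 \pmod p$ and distinct monomials have distinct images. Therefore $\der_{t_l}R_j=0$ forces $c_{i_1,\dots,i_r,k_1,\dots,k_n}=0$ whenever $i_l\geq 1$. Letting $l$ run over $1,\dots,r$ yields the lemma.

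The only real obstacle is the bookkeeping around invertibility and the interpretation of $\phi_l$ in the quotient. Everything else follows directly from \eqref{fecm}. The key observation is that all coordinates of $I$ share the same scalar connection in the $t$-directions, so the connection cancels in the ratios.
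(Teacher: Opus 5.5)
Your proof is correct and follows the same route as the paper. You use \eqref{fecm} to get $\der_{t_l}I_j=-\phi_l I_j$ for every coordinate, apply the quotient rule to obtain $\der_{t_l}R_j=0$, and conclude that no monomial with a positive power of any $t_l-t_l^0$ can occur. You also spell out two points the paper leaves implicit: that $I_{j_0}$ is invertible in the local algebra $A_{(t^0,z^0)}$, and that $\der_{t_l}$ is injective on monomials with $1\le i_l\le p-1$ because $i_l\not\equiv 0 \pmod p$.
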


\begin{proof}
By formula \eqref{fecm}, the standard derivative in the quotient satisfies $\der_{t_l}I = -\phi_l I$. Because $\phi_l$ is a scalar function, this implies $\der_{t_l}I_j = -\phi_l I_j$ for all components $j$.
Applying the quotient rule, we obtain
\bea
\der_{t_l}R_j = \frac{\left(\der_{t_l} I_j\right) I_{j_0} - I_j \left(\der_{t_l}I_{j_0}\right)}{I_{j_0}^2} = \frac{(-\phi_l I_j) I_{j_0} - I_j (-\phi_l I_{j_0})}{I_{j_0}^2} = 0.
\eea
Since $\der_{t_l} R_j \equiv 0$ in $A_{(t^0,z^0)}$ for all $l=1,\dots,r$, and the quotient ideal only restricts powers greater than or equal to $p$, all coefficients in $R_j$ involving strictly positive powers of $(t_l - t_l^0)$ must be zero.
\end{proof}

\subsection{ Global sections}

A nonzero rational function $\Phi(t,z)\in\K(t,z)$ is called a master function of the connection $\nabla^o$ if
\bean
\label{ast}
\frac{\der_{t_l} \Phi}{\Phi} = \phi_l(t,z), \qquad \frac{\der_{z_m} \Phi}{\Phi} = \psi_m(t,z)
\eean
for all $l, m$. If $g\in\K(t,z)$, then
\bean
\label{nndc}
\Phi \nabla^o_{t_l}(g) = \der_{t_l}(\Phi g),
\qquad
\Phi \nabla^o_{z_m}(g) =\der_{z_m}(\Phi g),
\eean
for all $l,\,m$. Therefore, $\Phi^{-1}$ is a nonzero flat section of $\nabla^o$,
and the connection $\nabla^o$ has zero $p$-curvature, that is, 
\bean
\label{p-c=0}
\tilde \phi_l = 0, \qquad 
\tilde \psi_m =0
\eean
for all $l,m$.

Conversely, if \eqref{p-c=0} holds, and $g\in \K(t,z)$ is such that $g(t^0,z^0) \ne 0$  for some
$(t^0,z^0)$, then
\bean
\label{Psi}
\phantom{aaa}
\Psi = 
\left(\prod_{i=1}^r \left(\nabla^o_{t_i}\right)^{p-1}\right)
\left(\prod_{j=1}^m \left(\nabla^o_{z_m}\right)^{p-1}\right)
\left(\prod_{i=1}^r (t_i-t^0_i)^{p-1}\right)
\left(\prod_{j=1}^n(z_j-z_j^0)^{p-1}\right) g
\eean
is a flat section of $\nabla^o$, and  $\Psi(t^0,z^0)\ne 0$. Hence $\Psi^{-1}$ is a master function of 
$\nabla^o$.

\medskip

Return to the  connections $\nabla$, $\nabla^o$, and  the  integral representation 
of Section \ref{sec 2.3}.

\begin{thm}
\label{thm global}

 Assume that 
\eqref{p-c=0} holds, and $\Phi$ is a master function.
Define  $J \in V\ox \K(t,z)$ by the formula:
\bean
\label{elt J}
J = \Phi \,I =\Phi \,\Delta (w)  = \left(\prod_{i=1}^l \der_{t_i}^{p-1} \right)(\Phi \,w).
\eean 
 Then $J$ is a flat section of $\nabla$\,:
\bean
\label{fsP}
\left(\der_{z_m} + H_m\right) J=0, \qquad m=1,\dots, n,
\eean
and also
\bean
\label{t-fl}
\der_{t_l}J=0, \qquad l=1,\dots,r.
\eean

\end{thm}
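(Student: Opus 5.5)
The plan is to work globally in $V\ox\K(t,z)$ and conjugate everything by the master function $\Phi$, using \eqref{nndc}. First I justify the identity in \eqref{elt J}. Applying \eqref{nndc} repeatedly gives, for any $V$-valued rational $g$,
\[
\Phi\left(\nabla^o_{t_i}\right)^{p-1}(g)=\der_{t_i}^{p-1}(\Phi g).
\]
Iterating over $i=1,\dots,r$, and using that the operators $\nabla^o_{t_i}$ commute by flatness, gives $\Phi\Delta(w)=\prod_i\der_{t_i}^{p-1}(\Phi w)$. Here $\Delta(w)$ is taken in $V\ox\K(t,z)$ rather than in the Frobenius neighborhood.

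Next I prove \eqref{t-fl}. We have
\[
\der_{t_l}J=\Phi\,\nabla^o_{t_l}\Delta(w)=\Phi\,(\nabla^o_{t_l})^p\prod_{j\ne l}(\nabla^o_{t_j})^{p-1}(w)=\Phi\,\tilde\phi_l\cdot(\dots)=0,
\]
by Lemma \ref{lem: tilde} and the hypothesis $\tilde\phi_l=0$. Equivalently, $\der_{t_l}^p=0$ in characteristic $p$ kills the expression $\der_{t_l}\prod_i\der_{t_i}^{p-1}(\Phi w)$ directly.

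For \eqref{fsP}, I multiply \eqref{IR} by $\Phi$ and use \eqref{nndc} to rewrite it as
\[
(\der_{z_m}+H_m)(\Phi w)=\sum_l\der_{t_l}(\Phi g_{l,m}).
\]
Here $H_m$ commutes with multiplication by $\Phi$, since it is an endomorphism-valued function. I then apply $\prod_i\der_{t_i}^{p-1}$ to both sides. On the left, this operator commutes with $\der_{z_m}$. It also commutes with $H_m$, because $H_m$ depends only on $z$. So the left side becomes $(\der_{z_m}+H_m)J$. On the right, each term contains $\der_{t_l}^p(\Phi g_{l,m})=0$. Hence $(\der_{z_m}+H_m)J=0$.

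The only delicate point is the commutation of $\prod_i\der_{t_i}^{p-1}$ with $H_m$. This holds precisely because $H_m$ is independent of $t$, which is the setup of Section~\ref{sec 2.1}. I also need to take care that all identities hold as rational functions, not merely in $A_{(t^0,z^0)}$. Both points are straightforward, so I expect no real obstacle; the argument is just the proof of Theorem \ref{thm eig} with $\nabla^o$ conjugated to trivial derivatives by $\Phi$.
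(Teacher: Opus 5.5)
Your proposal is correct and is essentially the paper's argument. The paper notes that \eqref{fecl} and \eqref{fecm} hold identically in $(t,z)$ because $\tilde\phi_l\equiv 0$, and then multiplies by $\Phi$ using \eqref{nndc}; you instead conjugate \eqref{IR} by $\Phi$ first and kill the right-hand side with $\der_{t_l}^p=0$ --- the same computation in the opposite order, which also makes visible that \eqref{p-c=0} is automatic once a rational master function exists.
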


\begin{proof} 

Equation \eqref{fecl} holds for any values of $(t,z)$ since \eqref{p-c=0} holds.
Multiplying both sides of \eqref{fecl} by $\Phi$ we obtain
\bean
0=\Phi \left(\der_{z_m}+ \psi_m + H_m\right) I = \left(\der_{z_m} +H_m\right) \Phi I.
\eean
This proves \eqref{fsP}. Equation \eqref{t-fl} follows similarly from \eqref{fecm}.
\end{proof}

\newpage

\subsection{Eigensection $I= \Delta(w)$ as a characteristic $p$ hypergeometric integral}

In the complex analytic setting, solutions to the KZ equations and related connections are classically constructed via multidimensional hypergeometric integrals of the form
\bea
\int_\gamma \Phi(t,z) w(t,z) dt_1 \dots dt_r,
\eea
where $\Phi(t,z)$ is a multi-valued master function and $\gamma$ is a suitable integration cycle ensuring that boundary terms vanish. The logarithmic derivatives of the master function define a rank-one flat connection:
\bea
\frac{\der_{t_l} \Phi}{\Phi} = \phi_l(t,z), \qquad \frac{\der_{z_m} \Phi}{\Phi} = \psi_m(t,z).
\eea

In a field of characteristic $p$, transcendental functions such as $\Phi(t,z)$ do not exist
in general, and standard integration over topological cycles is unavailable. However, the differential operator $\der_x^{p-1}$ naturally serves as an algebraic analog of integration. Specifically, the operator $\der_x^{p-1}$ acts on powers of $x$ by the formula:
\bea
\der_x^{p-1} (x^k) = 0 \quad \text{for all } 0 \le k < p-1, \qquad \text{and} \qquad
\der_x^{p-1} (x^{p-1}) = (p-1)! \equiv -1 \pmod p.
\eea
Because it annihilates all lower terms and isolates the $(p-1)$-th term, evaluating $\der_x^{p-1}$ at a point acts algebraically identically to extracting a residue or evaluating a contour integral.

To make this formal, consider a generic one-dimensional rank-one connection $\nabla = \der_x + f(x)$ where $f(x) = \sum_{i=1}^n \frac{a_i}{x-b_i}$. Formally, one may introduce a multi-valued integrating factor
\bea
\Phi(x) = \prod_{i=1}^n (x-b_i)^{a_i}.
\eea
Because $f(x)$ is the formal logarithmic derivative of $\Phi(x)$, the connection operator can be rewritten as:
\bea
\nabla = \frac{1}{\Phi(x)} \circ \der _x \circ \Phi(x).
\eea
By iterating this operator $p-1$ times, we obtain:
\bea
\nabla^{p-1}(g(x)) = \frac{1}{\Phi(x)} \der_x^{p-1} \Big( \Phi(x) g(x) \Big).
\eea
This expression provides the rigorous characteristic $p$ realization of the complex hypergeometric integral $\int \Phi(x) g(x) dx$.

Returning to the multidimensional setting of Theorem \ref{thm eig}, the operator $\Delta = \prod_{j=1}^r (\nabla^o_{t_j})^{p-1}$ replaces integration over the auxiliary variables $t_1, \dots, t_r$. Furthermore, the Bethe ansatz equations \eqref{BEA}, which require the $p$-curvatures $\tilde \phi_l$ to vanish, provide the exact characteristic $p$ geometric analog of the condition that the integration cycle $\gamma$ is closed (i.e., that the boundary evaluation terms of Stokes' theorem vanish).

Consequently, the element $I = \Delta(w)$ defined in the Frobenius neighborhood $A_{(t^0,z^0)}$ represents the systematic characteristic $p$ adaptation of a complex multidimensional hypergeometric integral, directly linking the algebraic geometry of $p$-curvatures to the classical Bethe ansatz. The applications of Theorem \ref{thm eig} in the next sections will illustrate this analogy more explicitly.

\subsection{Applications}

Given a flat connection in characteristic $p$,
one can construct eigenvectors and eigenvalues of the $p$-curvature operators
of this connection whenever an integral representation of the type described in Section \ref{sec 2.3} is available. Such integral representations are known
to exist for the systems of KZ--type differential equations and for the
quantum differential equations in the equivariant cohomology
of Nakajima varieties.

To keep the notation minimal, we focus on two examples. The first is the rational KZ connection
associated with
a tensor product of Verma modules over $\mathfrak{sl}_2$ in characteristic $p$.
The second example is the compatible system of rational KZ and dynamical connections associated with
a tensor product of Verma modules over $\mathfrak{sl}_2$ in characteristic $p$.

\section{KZ connection}
\label{sec 3}

\subsection{Verma modules}

Let $p$ be an odd prime and $\K$ a field of characteristic $p$.
Consider the Lie algebra \,$\frak{sl}_2$ \,over $\K$ with the standard generators
$e$, $f$, $h$ such that $[h,e]= 2e$, $[h,f]=2f$, $[e,f] = h$.

Let $\La \in\K$. Let $M_\La$ denote the Verma module over \,$\frak{sl}_2$
\,with highest weight \,$\La$ \,and highest weight vector $v_\La$:
\be
e\<\>v_\La=0,\qquad h\<\>v_\La=\La\>v_\La\,.\ee
A basis of $M_\La$ is formed by the vectors $f^{\<\>r}v_\La$, $r\in\Z_{\geq 0}$.

Let $\La_1,\dots,\La_n\in\K$.
We have the weight decomposition
\bea
\tsize\bigotimes^n_{j=1} M_{{\La_j}} =
\bigoplus_{r=0}^\infty
\left(\bigotimes^n_{j=1} M_{{\La_j}}\right)\left[\<\>\sum_{j=1}^n\La_j-2r\<\>\right]\,.
\eea
The basis of a weight subspace
\,$V=\left(\ox^n_{j=1} M_{{\La_j}}\right)\left[\<\>\sum_{j=1}^n\La_j-2r\<\>\right]$
\,is formed by the vectors
\beq
\label{basis}
f^{\<\>\vec r}:= f^{\<\>r_1}v_{\La_1} \ox\dots\ox f^{\<\>r_n}v_{\La_n}\,,
\eeq
where $\vec r=(r_1,\dots,r_n)$, $r_1+\dots+r_n = r$.
The set of such indices $\vec r$ is denoted by $\mc I_r$.

\smallskip

The kernel of the map
\bea
\tsize
e : \left(\bigotimes^n_{j=1} M_{{\La_j}}\right)\left[\<\>\sum_{j=1}^n\La_j-2r\<\>\right]
\to \left(\bigotimes^n_{j=1} M_{{\La_j}}\right)\left[\<\>\sum_{j=1}^n\La_j-2r+2\<\>\right],
\
u\mapsto eu,
\eea
is denoted by
$\Sing \left(\bigotimes^n_{j=1} M_{{\La_j}}\right)\left[\<\>\sum_{j=1}^n\La_j-2r\<\>\right]$ and called the subspace of singular vectors of weight $\left[\<\>\sum_{j=1}^n\La_j-2r\<\>\right]$.

\vsk.2>
Fix a weight subspace
\bea
\tsize V=\left(\bigotimes^n_{j=1} M_{{\La_j}}\right)\left[\<\>\sum_{j=1}^n\La_j-2r\<\>\right]\,.
\eea

\subsection{KZ connection}

Denote
\bean
\label{Casimir}
\Omega = e \otimes f + f \otimes e +
\frac{1}{2} h \otimes h\ \in\ \slt \ox\slt\,.
\eean
Let $z=(z_1,\dots,z_n)$ be variables. Define
the Gaudin Hamiltonians
by the formulas:
\bea
H_m(z_1,\dots,z_{n})\,=\, - \sum_{j\ne m}\frac{\Om^{(m,j)}}{z_m-z_j}\,,
\quad m=1,\dots,n.
\eea
These are elements of $(\End V)(z)$. Fix
\bea
\ka\,\in\,\K^\times.
\eea
Define the KZ connection $\nabla$ on the trivial bundle with base $\A^{n}$ and fiber $V$
by the formulas:
\bean
\label{kz}
\nabla_m\>=\,\der_{z_m} +\frac 1\ka H_m\,,\qquad m=1,\dots,n.
\eean
The connection is flat, see \cite{KZ, EFK}. The connection $\nabla$ has singularities at the poles of 
$(H_m)$.

\subsection{Auxiliary flat connection of rank one}

Let $\A^{r+n}$ be the affine space over $\K$ of dimension $r+n$ with coordinates
$(t, z)$.
Consider the rank-one connection $\nabla^o$ with base $\A^{r+n}$ defined by the
formulas:
\bea
\nabla^o_{t_l}
&=&
\der_{t_l} +\frac1\ka \phi_l, \qquad l=1,\dots, r,
\\
\nabla^o_{z_m}
&=&
\der_{z_m} + \frac1\ka \psi_m, \qquad m=1,\dots, n,
\eea
where
\bean
\label{kzoo}
\phi_l
&=&
\sum_{m=1}^n \frac{-\La_m}{t_l-z_m} + \sum_{j\ne l} \frac 2{t_l-t_j}\,, \quad l=1,\dots,r\,,
\\
\notag
\psi_m
&=&
\sum_{l=1}^r \frac {-\La_m}{z_m-t_l}+ \sum_{i\ne m} \frac{\La_m\La_i/2}{z_m-z_i} \,, \quad
m=1,\dots,n\,.
\eean
The connection $\nabla^o$ has singularities at the poles of the functions $(\phi_l,\psi_m)$.

\begin{lem}
The connection $\nabla^o$ is flat.
\end{lem}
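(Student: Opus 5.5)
Since $\nabla^o$ is a rank-one connection, its flatness is equivalent to the closedness of the 1-form
\[
\omega=\frac1\ka\Bigl(\sum_l\phi_l\,dt_l+\sum_m\psi_m\,dz_m\Bigr).
\]
The commutators are $[\nabla^o_{t_l},\nabla^o_{t_k}]=\frac1\ka(\der_{t_l}\phi_k-\der_{t_k}\phi_l)$, $[\nabla^o_{z_m},\nabla^o_{z_i}]=\frac1\ka(\der_{z_m}\psi_i-\der_{z_i}\psi_m)$, and $[\nabla^o_{t_l},\nabla^o_{z_m}]=\frac1\ka(\der_{t_l}\psi_m-\der_{z_m}\phi_l)$. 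The plan is to show that all three vanish.

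The cleanest route is to recognize $\omega$ formally as $\frac1\ka\,d\log\Phi$, where
\[
\Phi=\prod_{l,m}(t_l-z_m)^{-\La_m}\prod_{l<j}(t_l-t_j)^{2}\prod_{m<i}(z_m-z_i)^{\La_m\La_i/2}.
\]
Over $\K$ the exponents need not be integers, so $\Phi$ itself is not available. Instead, I will write
\[
\omega=\frac1\ka\sum_{\text{pairs}} c_{ab}\,\frac{d(x_a-x_b)}{x_a-x_b},
\]
where the coordinates are $x=(t,z)$ and the constant coefficients are $c_{t_l z_m}=-\La_m$, $c_{t_lt_j}=2$, and $c_{z_mz_i}=\La_m\La_i/2$. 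This form is closed term by term, because each term is a constant multiple of the differential of the function $\log$-free expression $d(x_a-x_b)/(x_a-x_b)$, whose exterior derivative is $-d(x_a-x_b)\wedge d(x_a-x_b)/(x_a-x_b)^2=0$.

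To finish, I must check that the coefficient of $dt_l$ in this sum is $\phi_l/\ka$ and that the coefficient of $dz_m$ is $\psi_m/\ka$. For a pair $(t_l,z_m)$, the term $-\La_m\,d(t_l-z_m)/(t_l-z_m)$ contributes $-\La_m/(t_l-z_m)$ to the $dt_l$-coefficient. It contributes $+\La_m/(t_l-z_m)=-\La_m/(z_m-t_l)$ to the $dz_m$-coefficient, which matches the first sum in $\psi_m$. A symmetric pair $(t_l,t_j)$ contributes $2/(t_l-t_j)$ to the $dt_l$-coefficient and $2/(t_j-t_l)$ to the $dt_j$-coefficient, consistent with $\phi_l$ and $\phi_j$. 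A pair $(z_m,z_i)$ contributes $\frac{\La_m\La_i/2}{z_m-z_i}$ to the $dz_m$-coefficient and, by the symmetry of the coefficient, $\frac{\La_i\La_m/2}{z_i-z_m}$ to the $dz_i$-coefficient.

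The only real subtlety is this bookkeeping: the coefficient attached to each pair must be symmetric, which holds for $2$ and for $\La_m\La_i/2$, and the sign in the mixed pairs must be consistent, which is exactly the point where $-\La_m/(z_m-t_l)$ in $\psi_m$ enters. As an alternative, the mixed condition can be verified directly: both $\der_{t_l}\psi_m$ and $\der_{z_m}\phi_l$ equal $-\La_m/(t_l-z_m)^2$, and the two pure conditions are immediate because the mixed partials of these pair terms are symmetric. No characteristic-$p$ issue arises here, since $p$ is odd, so the division by $2$ is legitimate.
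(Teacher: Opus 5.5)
Your proof is correct and is essentially the paper's own argument: the paper simply says flatness holds ``by inspection,'' and splitting the connection form into closed logarithmic pieces $c_{ab}\,d(x_a-x_b)/(x_a-x_b)$ with symmetric constant coefficients is exactly that inspection, done carefully (your sign bookkeeping for the mixed $(t_l,z_m)$ pairs is right). The same argument also covers the identically worded lemma in Section~\ref{sec 5}, because the extra terms $-\la$ in $\phi_l$, $\la/2$ in $\psi_m$, and $\psi_{n+1}$ are exactly the components of the exact form $d\bigl(\la(\tfrac12\sum_m z_m-\sum_l t_l)\bigr)$.
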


\begin{proof}

The proof is by inspection.
\end{proof}

The polynomial
\bea
h_p(y) = y^p-\ka^{p-1}y
\eea
is called the Artin–Schreier polynomial with parameter $\ka$.

\begin{lem}
The $p$-curvature operators of $\nabla^o$ are given by the formulas:
\bea
(\nabla_{t_l}^o)^p
&=&
\frac 1{\ka^p}\,
\tilde \phi_l,\qquad l=1,\dots,r,
\\
(\nabla_{z_m}^o)^p
&=&
\frac 1{\ka^p}\,
\tilde \psi_m, \qquad m=1,\dots,n,
\eea
where
\bean
\label{tikz}
\tilde \phi_l
&=&
 \sum_{m=1}^n \frac{-h_p(\La_m)}{t_l^p-z_m^p} + \sum_{j\ne l} \frac {h_p(2)}{t_l^p-t_j^p}
\,, \quad l=1,\dots,r\,,
\\
\notag
\tilde \psi_m
&=&
 \sum_{l=1}^r \frac {-h_p(\La_m)}{z_m^p-t_l^p}+ \sum_{i\ne m} \frac{h_p(\La_m\La_i/2)}{z_m^p-z_i^p} 
 \,, \quad
m=1,\dots,n\,.
\eean

\end{lem}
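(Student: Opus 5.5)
By Lemma \ref{lem: tilde} (applied with $\phi_l/\ka$, $\psi_m/\ka$ in place of $\phi_l,\psi_m$), we have
\[
(\nabla^o_{t_l})^p=\ka^{-p}\phi_l^p+\ka^{-1}\der_{t_l}^{p-1}\phi_l ,
\qquad
(\nabla^o_{z_m})^p=\ka^{-p}\psi_m^p+\ka^{-1}\der_{z_m}^{p-1}\psi_m .
\]
So it suffices to show $\phi_l^p+\ka^{p-1}\der_{t_l}^{p-1}\phi_l=\tilde\phi_l$, and the analogous identity for $\psi_m$. Both sides are sums of simple-pole terms $a/(x-b)$ in the differentiated variable $x$. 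The plan is to compute each side term by term and then match them.

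\emph{Derivative part.} This part is additive, and for a single term we have $\der_x^{p-1}\frac{a}{x-b}=(-1)^{p-1}(p-1)!\,a\,(x-b)^{-p}=-a/(x-b)^p$, using Wilson's theorem and the fact that $p$ is odd. In characteristic $p$ we also have $(x-b)^p=x^p-b^p$. The derivative part of $\ka^{p-1}\der_{t_l}^{p-1}\phi_l$ is therefore
\[
\sum_m \frac{\ka^{p-1}\La_m}{t_l^p-z_m^p}-\sum_{j\ne l}\frac{2\ka^{p-1}}{t_l^p-t_j^p}.
\]

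\emph{$p$-th power part.} The $p$-th power of a sum is not additive in general, but Frobenius is additive in characteristic $p$. Hence
\[
\Bigl(\sum_i \frac{a_i}{x-b_i}\Bigr)^p=\sum_i\frac{a_i^p}{x^p-b_i^p}.
\]
This gives $\phi_l^p=\sum_m\frac{-\La_m^p}{t_l^p-z_m^p}+\sum_{j\ne l}\frac{2^p}{t_l^p-t_j^p}$.

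\emph{Matching the two parts.} Adding the two contributions, the coefficient of $1/(t_l^p-z_m^p)$ is $-\La_m^p+\ka^{p-1}\La_m=-h_p(\La_m)$. The coefficient of $1/(t_l^p-t_j^p)$ is $2^p-\ka^{p-1}\cdot2=h_p(2)$. This is exactly the stated formula \eqref{tikz} for $\tilde\phi_l$.

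\emph{The $\psi_m$ case.} This is identical: the coefficients $-\La_m$ and $\La_m\La_i/2$ become $-h_p(\La_m)$ and $h_p(\La_m\La_i/2)$, with denominators $z_m^p-t_l^p$ and $z_m^p-z_i^p$.

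There is no real obstacle. The one point needing care is the sign and the constant $(p-1)!$ coming from the $(p-1)$-fold derivative: its sign $(-1)^{p-1}=1$ combined with Wilson's theorem $(p-1)!\equiv-1$ yields exactly the $-\ka^{p-1}y$ term in the Artin--Schreier polynomial $h_p$. One should also check that the normalization of $\ka$ is consistent, i.e.\ multiplying through by $\ka^p$ gives the stated factor $\ka^{-p}$ in front of $\tilde\phi_l$ and $\tilde\psi_m$.
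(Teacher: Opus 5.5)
Your proof is correct and follows the paper's route: the paper just says the lemma follows from \eqref{ex1}, $(\der_x+f)^p=f^p+f^{(p-1)}$, and you carry out that computation explicitly. Your two ingredients, the evaluation $\der_x^{p-1}\frac{a}{x-b}=-\frac{a}{x^p-b^p}$ via Wilson's theorem and the additivity of Frobenius for the $p$-th power, give exactly the coefficients $-h_p(\La_m)$, $h_p(2)$ and $h_p(\La_m\La_i/2)$.
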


\begin{proof}
The lemma follows from formula \eqref{ex1}.
\end{proof}

\subsection{Weight function}

The $V$-valued weight function
\bea
w(t,z)\,=\>\sum_{\vec r\in \mc I_r} w_{\vec r}(t;z)\,f^{\<\>\vec r}
\eea
is defined by the formula
\bean
\label{Wkz}
w_{\vec r}(t,z) = \frac{1}{r_1! \dots r_n!}
\Sym_{t_1,\dots,t_r} \left[ \prod_{s=1}^{n}
\prod_{i=1}^{r_s}
\frac{1}{t_{ r_1 + \dots + r_{s-1}+i} - z_s}
\right] ,
\eean
where for a function $f(t;z)$, we define
\be
\on{Sym}_{t_1,\dots,t_r}f(t;z) =
\sum_{\si\in S_r} f(t_{\si(1)}, \dots,t_{\si(r)}; z)\,.
\ee
The factors $r_i!$ in the denominator of formula \eqref{Wkz}
are canceled out by the symmetrization. See \cite{SV1, FMTV}.

\subsection{Integral representation}

\begin{thm}
[\cite{SV1}]
\label{thm ri}

There exist $V$-valued rational functions $g_{l,m}(t,z)$, for $l=1,$ \dots, $r$ and $m=1,\dots,n$, such that these functions, together with the rank-one connection $\nabla^o$ defined by \eqref{kzoo}
and the weight function $w$ defined by \eqref{Wkz},
give an integral representation of the KZ connection $\nabla$ defined by \eqref{kz}.
That is,
\bean
\label{Irr}
\left(\nabla^o_{z_m} + \frac 1\ka H_m\right)\!(w)
&=&
\sum_{l=1}^r \nabla^o_{t_l}\left(g_{l,m}\right), \qquad m=1,\dots,n.
\eean

\end{thm}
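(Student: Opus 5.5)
Since the statement is the characteristic $p$ version of the Schechtman--Varchenko identity, I would reproduce the classical proof of \cite{SV1}. That proof is a purely algebraic identity among rational functions with integer coefficients, together with the parameters $\La_m$, $\ka^{-1}$ and $1/2$. First I would clear the factor $\frac1\ka$. Multiplying \eqref{Irr} by $\ka$ and writing $\ka\,\nabla^o_{t_l}=\ka\der_{t_l}+\phi_l$, the identity splits into two pieces according to powers of $\ka$. The terms of order $\ka^0$ read
\be
\psi_m w + H_m w=\sum_l \phi_l\, g_{l,m}.
\ee
The terms of order $\ka^1$ read
\be
\der_{z_m} w=\sum_l \der_{t_l} g_{l,m}.
\ee
So I would look for $g_{l,m}$ independent of $\ka$ and verify both identities over $\Z[\tfrac12,\La_1,\dots,\La_n](t,z)$. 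Once they hold there, they specialize to $\K$ for odd $p$, because only the denominator $2$ occurs and $p\ne2$.

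For the concrete choice of $g_{l,m}$ I would take the standard one from \cite{SV1}, written componentwise in the basis $f^{\vec r}$. Its $\vec r$-component is, up to sign, the symmetrization over $t$ of the product in \eqref{Wkz} for the index $\vec r$ with one extra factor $\frac{1}{t_l-z_m}$ attached to the $l$-th variable. Equivalently, $g_{l,m}$ is obtained from $w$ by isolating the terms in which $t_l$ is assigned to $z_m$ and multiplying by $\frac{-1}{t_l-z_m}$; the signs and normalization are to be fixed against the case $r=1$. For $r=1$ everything is explicit. There $w=\sum_s f^{(s)}/(t-z_s)$ and $g_m=f^{(m)}/(t-z_m)$, up to normalization and sign. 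The identity $\der_{z_m}w=\der_t g_m$ then holds trivially, and the order-$\ka^0$ identity becomes a partial-fraction check involving the matrices $\Om^{(m,j)}$, as in the example of Section~\ref{n=3 p=3}.

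For general $r$ I would argue in two steps. The order-$\ka^1$ identity is immediate, since $\der_{z_m}$ acts only on the factors $(t_i-z_m)^{-1}$, and $\der_{z_m}(t_i-z_m)^{-1}=-\der_{t_i}(t_i-z_m)^{-1}$. The order-$\ka^0$ identity is the real content. I would reduce it to a residue argument: both sides are rational in $t_1$ with at most simple poles, so it is enough to compare residues at $t_1=z_s$ and at $t_1=t_j$, and to check decay at infinity. I would then induct on $r$, using the symmetry in $t$ together with the action of $e\ox f$, $f\ox e$ and $h\ox h$ on $f^{\vec r}$. The pieces $h\ox h/2$ produce the $\La_m\La_i/2$ terms of $\psi_m$. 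The pieces $e\ox f+f\ox e$ move a single $f$ between tensor factors; this is matched on the other side by the term $\phi_l g_{l,m}$, through the relation $\frac{1}{(t-z_m)(t-z_j)}=\frac{1}{z_m-z_j}\bigl(\frac1{t-z_m}-\frac1{t-z_j}\bigr)$.

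The main obstacle I expect is the bookkeeping in the order-$\ka^0$ identity at the poles $t_l=t_j$. There the terms $\frac{2}{t_l-t_j}$ in $\phi_l$ must cancel against the diagonal $h\ox h$ contributions and the factorials $r_s!$. Rather than redo this by hand, I would first try to cite the identity directly from \cite{SV1}, noting that their proof is a formal identity over $\Z[\frac12]$ and is therefore valid in any characteristic $p\ne2$. Only if that citation is unavailable would I carry out the residue induction explicitly.
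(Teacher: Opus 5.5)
Your overall route matches the paper's. The paper proves this statement only by citing \cite{SV1}, with explicit $g_{l,m}$ in \cite[Theorem 2.2]{EV1}. Your observation makes that citation legitimate over $\K$: the classical identity is a polynomial identity in $(t,z)$ over $\Z[\frac12][\La_1,\dots,\La_n]$, because the factorials in \eqref{Wkz} cancel and the matrix entries of $\Om^{(i,j)}$ in the basis $f^{\vec r}$ lie in $\Z[\frac12][\La]$. Splitting by powers of $\ka$ with $\ka$-independent $g_{l,m}$ is also a valid sufficient reduction.

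The explicit part of the proposal contains an error. The $g_{l,m}$ you describe for general $r$ takes the terms of $w$ in which $t_l$ is paired with $z_m$ and multiplies them by an extra $-\frac1{t_l-z_m}$. This has double poles at $t_l=z_m$, so $\der_{t_l}g_{l,m}$ has triple poles there that no other summand can cancel. Since $\der_{z_m}w$ has only double poles, the step you call immediate fails for this choice. It also contradicts your own $r=1$ formula, which is correct with the sign $g_m=-f^{(m)}/(t-z_m)$. The right choice has no extra factor:
\[
g_{l,m}\,=\,-\sum_{\vec r\in\mc I_r}\ \sum_{\sigma}\ \prod_{i=1}^r\frac1{t_i-z_{\sigma(i)}}\;f^{\vec r},
\]
where $\sigma$ runs over maps $\{1,\dots,r\}\to\{1,\dots,n\}$ with $|\sigma^{-1}(s)|=r_s$ and $\sigma(l)=m$. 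With this choice, the order-$\ka^1$ identity holds termwise by $\der_{z_m}(t_l-z_m)^{-1}=-\der_{t_l}(t_l-z_m)^{-1}$. The order-$\ka^0$ identity can be checked directly for $r=1$ and for $r=n=2$.

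Your residue reduction also needs repair. Both $(\psi_m+H_m)w$ and $\sum_l\phi_l g_{l,m}$ have double poles at $t_l=z_m$; for $r=1$ both equal $\frac{f^{(m)}}{t-z_m}\sum_s\frac{\La_s}{t-z_s}$. So you must compare full principal parts, not only residues. By contrast, the poles at $t_l=t_j$ are not the real difficulty. They cancel on the right-hand side because the swap $t_l\leftrightarrow t_j$ exchanges $g_{l,m}$ and $g_{j,m}$, so $g_{l,m}-g_{j,m}$ vanishes on $t_l=t_j$.

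Since the induction for general $r$ is only sketched, your argument, like the paper's, ultimately rests on the citation.
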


See formulas for $g_{l,m}$ in \cite[Theorem 2.2]{EV1} and \cite{SV1, SV2}.
\vsk.2>
By Theorem \ref{thm ri}, we can apply Theorem \ref{thm eig} to the connections $\nabla$ and $\nabla^o$
and the weight function $w$.

\vsk.2>

Given $(t^0,z^0)$, consider the ideal
\bea
\mathcal{J}_p = \langle (t_1-t_1^0)^p, \dots, (t_r-t_r^0)^p, (z_1-z_1^0)^p, \dots, (z_n-z_n^0)^p \rangle \ \subset \ \K[t,z]
\eea
and the quotient $A_{(t^0,z^0)} = \K[t,z]\big/ \mc J_p$.
The connection $\nabla$ induces a connection on $V\ox A_{(t^0,z^0)}$,
and the connection $\nabla^o$ induces a connection on $A_{(t^0,z^0)}$.

\vsk.2>
Denote $\Delta = \prod_{j=1}^r \left(\nabla^o_{t_j}\right)^{p-1}$.
Define an element $I\in V \ox A_{(t^0,z^0)}$
by the formula
\bea
I = \Delta(w).
\eea

\begin{thm}
\label{thm ei}

Let $(t^0, z^0)$ be a solutions to the system of Bethe ansatz equations:
\bean
\label{Bea}
\sum_{m=1}^n \frac{-h_p(\La_m)}{t_l^p-z_m^p} + \sum_{j\ne l} \frac {h_p(2)}{t_l^p-t_j^p}\,=\,0
\,, \quad l=1,\dots,r\,.
\eean
Then the element $I \in V\ox A_{(t^0,z^0)}$ is a flat section of the extended connection. That is,
\bea
\left(\der_{z_m}+ \frac 1\ka \psi_m +\frac 1\ka H_m\right) I
&=&
0,\qquad m=1,\dots, n,
\\
\left(\der_{t_l}+\frac 1\ka \phi_l\right) I
&=&
0,\qquad l=1,\dots, r.
\eea
Furthermore, the element $I$ is an eigensection of the $p$-curvature operators $(C_m)$ of the connection $\nabla$
with respective eigenvalues $(-\frac 1\ka\tilde \psi_m(t^0,z^0))$, that is,
\beq
\label{igv}
C_m\,I \,=\, - \frac 1{\ka^p}
\left( \sum_{l=1}^r \frac {-h_p(\La_m)}{(z_m^0)^p-(t_l^0)^p}+ \sum_{i\ne m}
\frac{h_p(\La_m\La_i/2)}{(z_m^0)^p-(z_i^0)^p} \right) I, \qquad m=1,\dots,n.
\eeq

\end{thm}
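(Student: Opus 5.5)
This is a direct specialization of Theorem \ref{thm eig}, so the plan is to check its hypotheses and rescale by $\ka$. Set $H'_m=\frac1\ka H_m$, $\phi'_l=\frac1\ka\phi_l$ and $\psi'_m=\frac1\ka\psi_m$. Then $\nabla_m=\der_{z_m}+H'_m$ is a flat connection as in Section \ref{sec 2.1}, and $\nabla^o$ is the flat rank-one connection of Section \ref{sec 2.2} built from $(\phi'_l,\psi'_m)$. The required flatness statements are the KZ flatness \cite{KZ,EFK} and the preceding lemma on $\nabla^o$. By Theorem \ref{thm ri}, the weight function $w$ of \eqref{Wkz}, together with the functions $g_{l,m}$, satisfies \eqref{Irr}. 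This is exactly the integral-representation condition \eqref{IR} for the primed data. Hence $I=\Delta(w)$ is the element \eqref{elt I}.

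Next we identify the Bethe ansatz system. By the last lemma, the $p$-curvature of $\nabla^o_{t_l}$ is $\ka^{-p}\tilde\phi_l$, with $\tilde\phi_l$ given by \eqref{tikz}. This lemma should be justified briefly. By \eqref{ex1}, $(\der_x+a/(x-b))^p=(a/(x-b))^p+\der_x^{p-1}(a/(x-b))$. Since $\der_x^{p-1}(x-b)^{-1}=(p-1)!\,(-1)^{p-1}(x-b)^{-p}=-(x-b)^{-p}$, this equals $(a^p-a)/(x-b)^p=h_1(a)/(x^p-b^p)$, where $h_1(a)=a^p-a$. With $a=\La/\ka$, this becomes $\ka^{-p}h_p(\La)/(x^p-b^p)$.

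For a sum of simple fractions in one variable, \eqref{ex1} is not additive in general. However, the $p$-curvature of a rank-one connection $\der_x+\sum_i f_i$ with mutually commuting scalar terms is additive, by Example \ref{ex2} applied with $V=\K$ (tensor products of rank-one connections). The $t_j$- and $z_i$-dependent terms are constants with respect to $\der_{t_l}$. This yields \eqref{tikz}. Since $\ka\ne0$, the condition that the Bethe ansatz equations \eqref{BEA} hold for the primed data is equivalent to \eqref{Bea}.

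Finally, Theorem \ref{thm eig} applied to the primed data gives flatness of the extended connection, which is precisely the two displayed equations. It also gives $C_mI=-\ka^{-p}\tilde\psi_m(t^0,z^0)\,I$, which is \eqref{igv} after substituting the formula for $\tilde\psi_m$ from \eqref{tikz}. (The statement's phrase ``$-\frac1\ka\tilde\psi_m$'' should read $-\frac1{\ka^p}\tilde\psi_m$, in agreement with the displayed formula.)

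The only real obstacle is the computation of $\tilde\phi_l$ and $\tilde\psi_m$ for the particular $\nabla^o$, that is, the additivity of rank-one $p$-curvatures and the evaluation $\der^{p-1}(x-b)^{-1}=-(x-b)^{-p}$. Everything else is a formal invocation of Theorems \ref{thm ri} and \ref{thm eig}.
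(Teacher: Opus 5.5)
Your proposal is correct and matches the paper's argument: the paper obtains Theorem \ref{thm ei} by feeding the integral representation of Theorem \ref{thm ri} into Theorem \ref{thm eig} after the $\ka$-rescaling, with \eqref{tikz} identifying the Bethe equations and giving the eigenvalue $-\ka^{-p}\tilde\psi_m(t^0,z^0)$ (so you are right that the textual ``$-\frac1\ka\tilde\psi_m$'' is a typo). One small correction: your aside that \eqref{ex1} is ``not additive in general'' is false, because in characteristic $p$ the map $f\mapsto f^p+f^{(p-1)}$ is additive (Frobenius is additive and differentiation is linear), so the detour through Example \ref{ex2} is valid but unnecessary.
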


\begin{cor}
\label{cor ig}

Let $(t^0, z^0)$ be a solution to the system of Bethe ansatz equations \eqref{Bea}.
Then the vector $I(t^0,z^0) \in V$ is an eigenvector of the $p$-curvature operators $(C_m(z^0))$
with respective eigenvalues $(-\tilde \psi_m(t^0,z^0))$. That is,
\bea
C_m(z^0)\,I(t^0,z^0) \,=\, -\, \frac 1\ka\tilde \psi_m(t^0,z^0)\, I(t^0,z^0), \qquad m=1,\dots,n.
\eea

\end{cor}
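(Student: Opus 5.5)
The plan is to obtain Corollary \ref{cor ig} by evaluating the identity of Theorem \ref{thm ei} at the closed point $(t^0,z^0)$ of the Frobenius neighborhood. Theorem \ref{thm ei} applies because, by Theorem \ref{thm ri}, the weight function \eqref{Wkz} and the rank-one connection \eqref{kzoo} give an integral representation of the KZ connection \eqref{kz}. It then gives the identity \eqref{igv} in $V\ox A_{(t^0,z^0)}$:
\[
C_m(z)\,I(t,z) = -\frac1{\ka^p}\,\tilde\psi_m(t^0,z^0)\, I(t,z).
\]
Here the scalar on the right is already a constant, because $\tilde\psi_m$ depends only on $(t^p,z^p)$ and $t_l^p\equiv (t_l^0)^p$, $z_m^p\equiv(z_m^0)^p$ modulo $\mc J_p$.

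Next, apply the evaluation homomorphism $A_{(t^0,z^0)}\to\K$, $t_l\mapsto t_l^0$, $z_m\mapsto z_m^0$. It is well defined since $\mc J_p$ lies in the maximal ideal of $(t^0,z^0)$. It is a $\K$-algebra map, so tensoring with $\on{Id}_V$ respects the action of the $\End V$-valued function $C_m(z)$. More precisely, $C_m$ acts on $V\ox A_{(t^0,z^0)}$ through its image in $\End V\ox A_{(t^0,z^0)}$. Evaluating a product equals the product of the evaluations, so the left side becomes $C_m(z^0)\,I(t^0,z^0)$. The right side becomes the same constant times $I(t^0,z^0)$, which gives the displayed eigen-relation.

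The only point needing care is that $C_m(z)$ is a rational function, so the point $z^0$ must avoid its poles; equivalently, $(t^0,z^0)$ must avoid the singularities of $\nabla$ and $\nabla^o$. Under this tacit assumption, already used in forming $I\in V\ox A_{(t^0,z^0)}$, the localized functions are regular there. Then the reduction to $A_{(t^0,z^0)}$, which factors through the local ring at $(t^0,z^0)$, is legitimate, and the evaluation commutes with it.

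Finally, a normalization remark. The eigenvalue in \eqref{igv} carries the factor $1/\ka^p$ coming from the lemma computing $(\nabla^o_{z_m})^p=\ka^{-p}\tilde\psi_m$. The prefactor written in the corollary should be read consistently with \eqref{igv}, namely as the eigenvalue $-\ka^{-p}\tilde\psi_m(t^0,z^0)$, that is, minus the $p$-curvature of $\nabla^o_{z_m}$ at the point. There is no genuine obstacle: the corollary is a specialization of Theorem \ref{thm ei}. Whether $I(t^0,z^0)\neq0$ is not asserted and does not need to be addressed.
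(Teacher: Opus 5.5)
Your proposal is correct and follows the paper's implicit argument: the corollary is just the identity \eqref{igv} of Theorem~\ref{thm ei} pushed through the evaluation homomorphism $A_{(t^0,z^0)}\to\K$, exactly as Corollary~\ref{cor eig} is read off from Theorem~\ref{thm eig}. Your normalization remark is also right: consistently with \eqref{igv} and with $(\nabla^o_{z_m})^p=\ka^{-p}\tilde\psi_m$, the eigenvalue should be $-\ka^{-p}\tilde\psi_m(t^0,z^0)$, so the factor $\frac1\ka$ in the displayed statement is a misprint.
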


See also Lemma \ref{lem dit}.

\begin{rem}
In \cite{EV2}, the spectrum of the $p$-curvature operators of periodic pencils of flat connections is described.
The KZ connection \eqref{kz} is an example of a periodic pencil of flat connections. Formula \eqref{igv} for the eigenvalues of the $p$-curvature operators agrees with the dscription in \cite{EV2}.

Eigenvectors and eigenvalues of the $p$-curvature operators, as well as the $p$-curvature operators
themselves, are described in \cite{VV1} in the special case when $r=1$ and $\La_m=1$ for $m=1,\dots, n$.

\end{rem}

\subsection{Singular vectors} It is known that the KZ connection on
$\bigotimes^n_{j=1} M_{{\La_j}}$ commutes with the action of $\frak{sl}_2$. Hence
the KZ connection preserves all subspaces of singular vectors
\\
$\Sing \left(\bigotimes^n_{j=1} M_{{\La_j}}\right)\left[\<\>\sum_{j=1}^n\La_j-2r\<\>\right]$.
It turns out that the eigenvectors of the $p$-curvature operators provided by Theorem \ref{thm ei} and Corollary \ref{cor ig}
are singular vectors.

\begin{thm}
[\cite{SV1}]
\label{thm sing}
There exist
$\left(\bigotimes^n_{j=1} M_{{\La_j}}\right)\left[\<\>\sum_{j=1}^n\La_j-2r+2\right]$-valued
functions $h_l(t,z)$ for $l=1,\dots,r$, rational in variables $(t,z)$, such that
\bean
\label{Si}
e\, w \,= \,
\sum_{l=1}^r \nabla^o_{t_l}\left(h_{l}\right).
\eean
\end{thm}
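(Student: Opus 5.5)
We need to prove Theorem \ref{thm sing}: there exist rational functions $h_l$ with $e\,w=\sum_l \nabla^o_{t_l}(h_l)$. The identity is purely algebraic, with rational coefficients in $\La_m$, $\ka$ and $z,t$. So the plan is to write down explicit $h_l$, as in the characteristic zero construction of \cite{SV1}, and verify the identity directly; the verification then holds over any field where the denominators make sense.

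\textbf{Computing $e\,w$.} In the basis \eqref{basis} we have $e\,f^{r}v_\La=r(\La-r+1)f^{r-1}v_\La$. Hence
\[
e\,w=\sum_{\vec r\in\mc I_r}\sum_{s:\,r_s>0} r_s(\La_s-r_s+1)\,w_{\vec r}\,f^{\vec r-\vec e_s}.
\]
For a fixed target index $\vec q\in\mc I_{r-1}$, the coefficient of $f^{\vec q}$ is
\[
\sum_s (q_s+1)(\La_s-q_s)\,w_{\vec q+\vec e_s}(t,z).
\]

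\textbf{Choosing $h_l$.} Take $h_l$ to be the weight function in $r-1$ variables with $t_l$ omitted, multiplied by a scalar, say
\[
h_l=-\ka\, w^{(r-1)}(t_1,\dots,\widehat{t_l},\dots,t_r;z),
\]
possibly with a sign or normalization to be fixed later. Then
\[
\nabla^o_{t_l}h_l=\tfrac1\ka\,\phi_l\,h_l ,
\]
because $h_l$ does not depend on $t_l$. The identity to check therefore becomes
\[
e\,w=-\sum_l \phi_l\, w^{(r-1)}(\dots\widehat{t_l}\dots).
\]

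\textbf{Verifying the identity.} Expand $\phi_l$ via \eqref{kzoo} and compare coefficients of $f^{\vec q}$. Both sides are symmetric rational functions of $t$. I would reduce the comparison to a partial-fraction identity in one variable.

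1. Terms involving $\La_s/(t_l-z_s)$ reproduce the leading part $\La_s(q_s+1)w_{\vec q+\vec e_s}$. The reason is that $w_{\vec q+\vec e_s}$ is a symmetrization in which $t_l$ is attached to the factor $1/(t_l-z_s)$ with multiplicity $q_s+1$.

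2. The remaining terms $-q_s(q_s+1)w_{\vec q+\vec e_s}$ must come from the $\sum_{j\ne l}2/(t_l-t_j)$ part of $\phi_l$. Here I use the identity
\[
\frac{1}{(t_l-t_j)(t_j-z)}+\frac{1}{(t_j-t_l)(t_l-z)}=\frac{1}{(t_l-z)(t_j-z)}.
\]
Applied pairwise, it converts the $t$–$t$ poles into extra factors attached to the same $z_s$. The count of ordered pairs within a block of size $q_s+1$ gives $q_s(q_s+1)$, with the factor $2$ absorbed by the pairing.

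\textbf{Main obstacle.} The hardest step is this combinatorial bookkeeping of the symmetrization in step 2: checking that the $2/(t_l-t_j)$ terms produce exactly $-q_s(q_s+1)$ with no cross-block leftovers. I would first verify it for $r=1,2$ and then handle general $r$ by induction on $r$, using the recursive structure of $\Sym$. If a sign mismatch appears, I would adjust the normalization of $h_l$ (the overall factor $\ka$ and the sign).

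Finally, the only denominators that appear are differences $t_l-t_j$, $t_l-z_s$, $z_m-z_i$ and $\ka$. The factorials have already cancelled in $w$. So the identity holds in characteristic $p$ as well.
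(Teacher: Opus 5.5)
Your proposal is correct, but it argues differently from the paper. The paper gives no proof of this statement: it imports it from \cite{SV1}, where (over $\C$) it comes out of the general comparison between Lie algebra homology and the flag complex of the discriminantal arrangement.

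Your primitive $h_l=-\ka\,w^{(r-1)}(t_1,\dots,\widehat{t_l},\dots,t_r;z)$ works exactly as written, with no sign or normalization adjustment, because
\[
e\,w\,=\,-\sum_{l=1}^r\phi_l\;w^{(r-1)}(t_1,\dots,\widehat{t_l},\dots,t_r;z).
\]
Your counts are right: $\La_s(q_s+1)$ in Step 1 and $2\binom{q_s+1}{2}=q_s(q_s+1)$ in Step 2.

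The bookkeeping you single out as the main obstacle needs no induction on $r$. For $l<j$, take the term of $\frac{2}{t_l-t_j}\,w^{(r-1)}(\dots\widehat{t_l}\dots)$ in which $t_j$ is attached to $z_s$. Pair it with the term of $\frac{2}{t_j-t_l}\,w^{(r-1)}(\dots\widehat{t_j}\dots)$ in which $t_l$ is attached to $z_s$ and all other variables are attached identically. Your partial-fraction identity turns each such pair into $2/\bigl((t_l-z_s)(t_j-z_s)\bigr)$ times the common remaining product. This is a bijection onto pairs (term of $w_{\vec q+\vec e_s}$, unordered pair of slots in its $s$-block), so nothing crosses blocks.

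More compactly, write $w=F(t_1)\cdots F(t_r)\,v$ with $v=v_{\La_1}\ox\dots\ox v_{\La_n}$ and the commuting operators $F(t)=\sum_s f^{(s)}/(t-z_s)$. Then the relations $[e,F(t)]=\sum_s h^{(s)}/(t-z_s)$ and $\bigl[[e,F(t)],F(u)\bigr]=\frac{2}{t-u}\bigl(F(t)-F(u)\bigr)$ give the displayed identity in two lines.

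What your route buys is self-containedness and, more to the point here, validity in characteristic $p$. The identity has integer coefficients, $\ka$ cancels, and the factorials in \eqref{Wkz} are never inverted. Quoting the characteristic-zero result of \cite{SV1} in the present setting tacitly needs exactly this observation, which the paper does not make. What the route of \cite{SV1} buys is generality: more general Lie algebras and weight functions are handled by a single comparison of complexes.
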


\begin{cor}
\label{cor sing}
The element $I$ of Theorem \ref{thm ei} lies in
\bea
\Sing \left(\bigotimes^n_{j=1} M_{{\La_j}}\right)\left[\<\>\sum_{j=1}^n\La_j-2r\<\>\right]\ox A_{(t^0,z^0)}.
\eea

\end{cor}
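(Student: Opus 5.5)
The plan is to imitate the proof of formula \eqref{fecl} in Theorem \ref{thm eig}, applying $\Delta$ to identity \eqref{Si} of Theorem \ref{thm sing} in place of \eqref{IR}. The vector $e\,I$ is computed componentwise in the basis \eqref{basis}. Two observations make this work. First, $e$ is a constant linear map
\[
V\to \Bigl(\bigotimes_{j=1}^n M_{\La_j}\Bigr)\Bigl[\sum_{j=1}^n\La_j-2r+2\Bigr].
\]
Its matrix in the bases $f^{\vec r}$ has entries that are polynomials in the $\La_j$ and do not depend on $(t,z)$. Second, the operators $\nabla^o_{t_j}$ are scalar differential operators: each is $\der_{t_j}$ plus multiplication by the scalar function $\frac1\ka\phi_j$. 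Hence $e\otimes \on{Id}$ commutes with each $\nabla^o_{t_j}$, and therefore with $\Delta$, on vector-valued functions (both before and after passing to the quotient $A_{(t^0,z^0)}$).

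The key steps are as follows.

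\textbf{Step 1.} Compute $e\,I = e\,\Delta(w) = \Delta(e\,w)$. This is valid in $(\bigotimes M_{\La_j})[\dots+2]\otimes A_{(t^0,z^0)}$ because $\Delta$ and $e$ commute.

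\textbf{Step 2.} Substitute \eqref{Si} to obtain
\[
e\,I=\sum_{l=1}^r \Delta\,\nabla^o_{t_l}(h_l).
\]

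\textbf{Step 3.} Apply Lemma \ref{lem Sto}, componentwise in a basis of the target weight space, to each term $\Delta\,\nabla^o_{t_l}(h_l)$. Each such term equals $\frac1{\ka^p}\tilde\phi_l\cdot\prod_{j\ne l}(\nabla^o_{t_j})^{p-1}(h_l)$. Since $\tilde\phi_l$ is a function of $(t^p,z^p)$, it is a constant in the Frobenius neighborhood, and that constant vanishes by the Bethe ansatz equations \eqref{Bea}.

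\textbf{Step 4.} Conclude that $e\,I=0$ in the quotient, i.e.\ every $A_{(t^0,z^0)}$-coefficient of $I$ lies in the kernel of $e$. Since $e$ is $\K$-linear and $A_{(t^0,z^0)}$ is free over $\K$, this means $I\in \Sing(\dots)\otimes A_{(t^0,z^0)}$.

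The only point requiring care, which I expect to be the main (mild) obstacle, is making sense of rational functions such as $w$ and $h_l$ in $A_{(t^0,z^0)}$. One has to assume that $(t^0,z^0)$ avoids the poles of $w$, $h_l$, $\phi_l$ and $\psi_m$. Under that assumption these functions are regular at $(t^0,z^0)$ and so have well-defined images in the local Artinian algebra: the denominators are units there, since their values at $(t^0,z^0)$ are nonzero and the maximal ideal is nilpotent. With this granted, the derivatives $\der_{t_j}$ are compatible with the reduction map to $A_{(t^0,z^0)}$, and identity \eqref{Si} descends to the quotient. The rest of the argument is formal.
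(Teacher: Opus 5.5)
Your proposal is correct and is essentially the paper's own proof: the paper likewise writes $e\,I=\sum_{l}(\nabla^o_{t_l})^p\prod_{j\ne l}(\nabla^o_{t_j})^{p-1}(h_l)=\sum_l \tilde\phi_l\cdot\prod_{j\ne l}(\nabla^o_{t_j})^{p-1}(h_l)=0$. This uses that the constant operator $e$ commutes with the scalar operator $\Delta$, the identity \eqref{Si}, and the vanishing of $\tilde\phi_l$ in $A_{(t^0,z^0)}$ at a Bethe root. Your extra remarks only make explicit what the paper leaves implicit: regularity of $w$ and $h_l$ at $(t^0,z^0)$, the factor $\ka^{-p}$, and $\ker(e\otimes\on{Id})=\ker e\otimes A_{(t^0,z^0)}$.
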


\begin{proof}
We have
\bea
e\, I
=
\sum_{l=1}^r \left(\nabla^o_{t_l}\right)^p
\left(\prod_{j\ne l}^r \left(\nabla^o_{t_j}\right)^{p-1}\right) \!\!
\left(h_{l}\right)
= \sum_{l=1}^r \tilde \phi_l \cdot \left(\prod_{j\ne l}^r \left(\nabla^o_{t_j}\right)^{p-1}\right) \!\!
\left(h_{l}\right) = 0.
\eea
\end{proof}

\section{Polynomial flat sections of KZ connection}
\label{sec 4}

To illustrate our results, we consider the special case of the KZ connection \eqref{kz} with fiber $V$
where
\bean
\label{7.2}
\La_{1},\, \dots, \,\La_{n} \, \in \F_p\subset \K, \ \ \ka \in\F^\times_p\,.
\eean

\subsection{Master function}

For \,$b\in\F_p$\,, \,let \,$\bar b\in\{0,\ldots p-1\>\}\subset \Z$
\,be such that \,$\bar b=b\pmod p$\,.
Under assumptions \eqref{7.2}, denote
\bean
\label{Master}
\Phi(t,z)\
=
\prod_{i<j} (z_i-z_j)^{\overline{\La_i\La_j/2\ka}}
\prod_{1 \leq i \leq j \leq k} (t_i-t_j)^{\overline{2/\ka}}
\prod_{l=1}^{n} \prod_{i=1}^{k} (t_i-z_l)^{\overline{-\La_l/\ka}}.
\notag
\eean

\begin{lem}
If $g\in\K(t,z)$, then
\bean
\label{ndc}
\Phi \nabla^o_{t_l}(f) = \der_{t_l}(\Phi f),
\qquad
\Phi \nabla^o_{z_m}(f) = \der_{z_m}(\Phi f),
\eean
for all $l,\,m$.
\end{lem}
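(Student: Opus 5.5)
The identities \eqref{ndc} are equivalent to the statement that $\Phi$ is a master function of $\nabla^o$ in the sense of \eqref{ast}, i.e.
\[
\frac{\der_{t_l}\Phi}{\Phi}=\frac1\ka\,\phi_l,\qquad \frac{\der_{z_m}\Phi}{\Phi}=\frac1\ka\,\psi_m .
\]
I will prove these two logarithmic-derivative identities first. By the Leibniz rule,
\[
\der_{t_l}(\Phi f)=\Phi\Bigl(\der_{t_l}f+\frac{\der_{t_l}\Phi}{\Phi}\,f\Bigr)=\Phi\,\nabla^o_{t_l}(f),
\]
and the same computation works for $\der_{z_m}$. (Here $f$ is the element of $\K(t,z)$ called $g$ in the statement.)

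The logarithmic derivative of a product is the sum of the logarithmic derivatives of the factors. For a factor $(x-y)^N$ with $N\in\Z_{\ge0}$ we have
\[
\der_x\log (x-y)^N=\frac{N}{x-y},
\]
and $N$ enters only through its image in $\K$. By the definition of $\bar b$, the image of $\overline{b}$ in $\F_p\subset\K$ equals $b$. Hence each exponent may be replaced by the field element it represents:
\[
\overline{\La_i\La_j/2\ka}\mapsto \La_i\La_j/2\ka,\qquad \overline{2/\ka}\mapsto 2/\ka,\qquad \overline{-\La_l/\ka}\mapsto-\La_l/\ka .
\]
Here the hypotheses \eqref{7.2} are used: $\La_i\in\F_p$ and $\ka\in\F_p^\times$ guarantee that these quotients lie in $\F_p$, so the bars make sense.

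It remains to sum the contributions, reading the products in \eqref{Master} with $i<j$ among the $t$'s and $k=r$ (the printed index ranges are evidently typos).

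For $\der_{t_l}$:
\begin{itemize}
\item each factor $(t_i-z_m)^{\overline{-\La_m/\ka}}$ with $i=l$ gives $\frac{-\La_m/\ka}{t_l-z_m}$;
\item each factor $(t_i-t_j)^{\overline{2/\ka}}$ containing $t_l$ gives $\frac{2/\ka}{t_l-t_j}$. If $t_l$ is the second entry, two signs cancel: the derivative of $t_i-t_l$ is $-1$, and $t_i-t_l=-(t_l-t_i)$.
\end{itemize}
Summing gives $\frac1\ka\phi_l$ as in \eqref{kzoo}.

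For $\der_{z_m}$:
\begin{itemize}
\item the factors $(t_l-z_m)^{\overline{-\La_m/\ka}}$ give $\frac{\La_m/\ka}{t_l-z_m}=\frac{-\La_m/\ka}{z_m-t_l}$;
\item the factors $(z_i-z_j)$ containing $z_m$ give $\frac{\La_m\La_i/2\ka}{z_m-z_i}$, with the same sign cancellation as above.
\end{itemize}
Summing gives $\frac1\ka\psi_m$.

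Since the whole argument is mechanical, the main obstacle is bookkeeping rather than a genuine difficulty. The points to be careful about are:
\begin{itemize}
\item the signs when $t_l$ or $z_m$ is the second entry of a difference;
\item the fact that the integer exponent $\overline{b}$ enters $\der\log$ only through its class mod $p$;
\item the typos in the index ranges of \eqref{Master}.
\end{itemize}
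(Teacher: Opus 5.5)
Your proof is correct and is the direct computation that the paper compresses into ``the proof is by inspection.'' By the Leibniz rule, the two identities reduce to $\der_{t_l}\Phi/\Phi=\frac1\ka\phi_l$ and $\der_{z_m}\Phi/\Phi=\frac1\ka\psi_m$. You then verify these factor by factor, using that the integer exponent $\overline{b}$ acts in $\K$ as $b$. Your reading of the index ranges in the definition of $\Phi$ (pairs $i<j$ of $t$'s, and $k=r$) is the right one: $i\le j$ would make $\Phi$ vanish, and a product over ordered pairs would double the coefficient $2/\ka$.
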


\begin{proof}
The proof is by inspection.
\end{proof}

\subsection{Bethe ansatz equations}

Under assumptions \eqref{7.2},
we have
$h_p(\La_m)= h_p(2)=h_p(\La_m\La_i/2) =0$
for\ $1\leq i,m\leq n$, and
\bea
\tilde \phi_l
&=&
 \sum_{m=1}^n \frac{-h_p(\La_m)}{t_l^p-z_m^p} + \sum_{j\ne l} \frac {h_p(2)}{t_l^p-t_j^p} = 0
\,, \quad l=1,\dots,r\,.
\eea
That is, every point $(t^0,z^0)$ is a solution to the system of Bethe ansatz equations \eqref{Bea}.
Furthermore, we have:
\bea
\tilde \psi_m
&=&
 \sum_{l=1}^r \frac {-h_p(\La_m)}{z_m^p-t_l^p}+ \sum_{i\ne m}
\frac{h_p(\La_m\La_i/2)}{z_m^p-z_i^p}=0\,, \quad
m=1,\dots,n\,.
\eea
That is, the functions $(\tilde \psi_{m}(t,z))$ are constants equal to 0.

\subsection{Polynomial flat sections}

The function $\Phi (t,z)\, w(t,z)$ is a $V$-valued polynomial.
Consider the expansion
\bea
\Phi (t,z)\, w(t,z)\,=\!\sum_{i_1,\dots,\>i_r\in\Z_{\geq 0}}\!
Q_{i_1,\dots,\>i_r}(z)\;t_1^{i_1}\!\<\dots t_r^{i_r}\>.
\eea

\begin{thm}
\label{thm Lau}
For any $l_1,\dots,l_r\in \Z_{\geq 0}$, the $V$-valued polynomial
$Q_{pl_1+p-1,\dots,\>pl_r+p-1}(z)$ is a flat section
of the KZ connection \eqref{kz},
\bean
\label{flat lp}
\left(\der_{z_m}
\,+ \,\frac 1\ka \,H_m\right)
Q_{pl_1+p-1,\dots,\>pl_r+p-1}\,=\,0, \quad m=1,\dots,n.
\eean
Furthermore, the polynomial $Q_{pl_1+p-1,\dots,\>pl_r+p-1}(z)$ is an eigensection of the
$p$-curvature operators with
all eigenvalues equal to 0, that is
\bean
\label{7.4}
C_{m}(z) \,Q_{pl_1+p-1,\dots,\>pl_r+p-1}(z) = 0,
\quad m=1,\dots,n.
\eean
We also have
\bean
\label{7.5}
e \, Q_{pl_1+p-1,\dots,\>pl_r+p-1}(z)= 0.
\eean

\end{thm}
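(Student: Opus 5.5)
The plan is to reduce everything to the global statement, Theorem \ref{thm global}, applied with the master function $\Phi$ of \eqref{Master}. It is then refined by the observation that, because $\Phi w$ is a polynomial and $\der_t^{p-1}$ isolates specific coefficients, the $t$-independence \eqref{t-fl} lets us read off individual coefficients.

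By the preceding subsection, under \eqref{7.2} all $\tilde\phi_l$ and $\tilde\psi_m$ vanish identically, so \eqref{p-c=0} holds. By \eqref{ndc}, $\Phi$ is a master function. Theorem \ref{thm global} then says that
\[
J=\Bigl(\prod_{i=1}^r\der_{t_i}^{p-1}\Bigr)(\Phi w)
\]
satisfies $(\der_{z_m}+\frac1\ka H_m)J=0$ and $\der_{t_l}J=0$. The proof of that theorem works verbatim over $\K(t,z)$, since \eqref{fecl} and \eqref{fecm} hold identically when all $\tilde\phi_l=0$.

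Next, expand $J$ using the expansion of $\Phi w$. Since $\der_t^{p-1}t^{i}=i(i-1)\cdots(i-p+2)\,t^{i-p+1}$, and this falling factorial vanishes mod $p$ unless $i\equiv p-1 \pmod p$, in which case it equals $(p-1)!=-1$, we get
\[
J=(-1)^r\sum_{l_1,\dots,l_r\ge0}Q_{pl_1+p-1,\dots,pl_r+p-1}(z)\,t_1^{pl_1}\cdots t_r^{pl_r}.
\]
The monomials $t_1^{pl_1}\cdots t_r^{pl_r}$ are $\K(z)$-linearly independent. The operator $\der_{z_m}+\frac1\ka H_m$ acts only in $z$ and commutes with multiplication by these monomials. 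Hence flatness of $J$ forces each coefficient to be flat, which gives \eqref{flat lp}. The condition $\der_{t_l}J=0$ holds automatically here and is not needed.

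Then \eqref{7.4} follows immediately, since a flat section is annihilated by $C_m=(\nabla_m)^p$.

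For \eqref{7.5}, use Theorem \ref{thm sing} together with the identity $\Phi\nabla^o_{t_l}(h_l)=\der_{t_l}(\Phi h_l)$. This gives $e\,\Phi w=\sum_l\der_{t_l}(\Phi h_l)$. Applying $\prod_i\der_{t_i}^{p-1}$ yields $\der_{t_l}^p(\cdots)=0$ in each term, so $eJ=0$. Extracting coefficients as above gives $e\,Q_{pl_1+p-1,\dots}=0$.

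The main obstacle is this last step, together with the coefficient extraction, because $\Phi h_l$ and $\Phi g_{l,m}$ are a priori rational rather than polynomial in $t$. I would handle this in one of two ways. The first is to work in Laurent series in $t$ with coefficients in $\K(z)$, or over a common denominator in $z$ only, and argue that $\der^p$ kills everything. The second is to argue directly via the coefficient of $t^{pl+p-1}$ in a $\der_{t}$-derivative, which is always zero. One also needs to check that $\Phi w$ is genuinely polynomial. The exponents $\overline{-\La_l/\ka}$ may be $0$ and fail to cancel the poles $1/(t-z_s)$ of $w$. If so, I would replace $\Phi$ by $\Phi\prod(t_i-z_l)^{p}$, which is still a master function up to a $p$-th power factor, as the paper implicitly allows.
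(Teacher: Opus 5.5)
Your proof is correct and follows the paper's route: apply Theorem \ref{thm global} with the master function $\Phi$ of Section \ref{sec 4}, expand $J$ in the monomials $t_1^{pl_1}\cdots t_r^{pl_r}$ (your sign $(-1)^r$ is the accurate one), read off coefficients to get flatness, and then obtain \eqref{7.4} and \eqref{7.5}, which the paper calls ``clear''. The obstacle you flag for \eqref{7.5} is not real, because $\der_{t_l}^p$ annihilates all of $\K(t,z)$ in characteristic $p$; your remark that $\Phi w$ is a polynomial only when every $\La_l\neq 0$ in $\F_p$ is a fair caveat about the paper's setup.
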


\begin{proof}
By Theorem \ref{thm global}, the function
\bean
\label{JP}
J = \left(\prod_{i=1}^l \der_{t_i}^{p-1} \right)(\Phi \,w) = 
-\sum_{l_1,\dots,l_r\geq 0}
Q_{pl_1+p-1,\dots,\>pl_r+p-1}(z)\,t_1^{pl_1}\dots t_r^{pl_r}
\eean
satisfies 
\bea
\left(\der_{z_m}
\,+ \,\frac 1\ka \,H_m\right) J\,=\,0, \quad m=1,\dots,n.
\eea
This implies \eqref{flat lp}. The other claims of the theorem are clear.
\end{proof}

Notice that $\der_{t_l} J=0$ for $l=1,\dots,r$ by Theorem \ref{thm global}. That statement also follows from \eqref{JP}.

\begin{rem}

The statement that the polynomials $Q_{pl_1+p-1,\dots,\>pl_r+p-1}(z)$ are flat sections of the KZ connection was proved in \cite{SV2} by a different construction. The polynomials
\\
$Q_{pl_1+p-1,\dots,\>pl_r+p-1}$ are called the $p$-hypergeometric flat sections.
\end{rem}

\begin{exmp}
Let $r=1$ and $n=2g+1$ for some positive integer $g$. Let $\La_m=1$ for $m=1,\dots,n$. Let $\ka=2$. Then
\bea
\Phi &=& 
\prod_{i<j} (z_i-z_j)^{\overline{1/4}}
\prod_{m=1}^n (t_1-z_m)^{(p-1)/2}\,,
\\
w &=& \sum_{j=1}^n \frac 1{t_1-z_j} v_1\ox \dots \ox fv_1\ox \dots \ox v_1\,,
\eea
where $fv_1$ stays at the $j$th position. Then Theorem \ref{thm Lau} produces $g$
\ $p$-hypergeometric flat sections $Q_{p-1}$, $Q_{2p-1}$, \dots, $Q_{gp-1}$.

\vsk.2>

It is shown in \cite{VV1}, that these solutions span all rational flat section of the KZ connection in this case, see also \cite{VV2}.

\end{exmp}

\section{KZ and dynamical connections}
\label{sec 5}

\subsection{Compatible KZ and dynamical connection}

Fix a weight subspace
\bea
\tsize V=\left(\bigotimes^n_{j=1} M_{{\La_j}}\right)\left[\<\>\sum_{j=1}^n\La_j-2r\<\>\right]\,.
\eea
Let $z=(z_1,\dots,z_n)$, $\la$ be variables. Define
the Gaudin Hamiltonians and the Dynamical Hamiltonian
by the formulas:
\bea
H_m(z_1,\dots,z_{n}, \la)
&=&
-\frac \la 2\, h^{(m)}\, -\sum_{j\ne m}\frac{\Om^{(m,j)}}{z_m-z_j}\,,
\quad m=1,\dots,n,
\\
D(z_1,\dots,z_{n}, \la)
&=&
-\sum_{i=1}^n \, \frac{z_i}2\,h^{(i)}\, - \sum_{i,j=1}^n \frac{f^{(i)}e^{(j)}}\la\,.
\eea
These are elements of $(\End V)(z,\la)$. Fix
\bea
\ka\,\in\,\K^\times.
\eea
Define a connection $\nabla$ on the trivial bundle with base $\A^{n+1}$ and fiber $V$
by the formulas:
\bean
\label{kzd}
\nabla_m
&=&
\der_{z_m} + \frac 1\ka H_m\,,\qquad m=1,\dots,n,
\\
\notag
\nabla_{n+1}
&=&
\der_{\la} + \frac 1\ka D\,.
\eean
The connection is flat by \cite{FMTV}. The connection $\nabla$ is called the {\it
compatible KZ and dynamical connections}.
The connection $\nabla$ has singularities at the poles of the operators $(H_m, D)$.

\subsection{Auxiliary flat connection of rank one}

Let $\A^{r+n+1}$ be the affine space over $\K$ of dimension $r+n+1$ with coordinates
$(t, z,\la)$.
Consider the rank-one connection $\nabla^o$ with base $\A^{r+n+1}$ defined by the
formulas:
\bea
\nabla^o_{t_l}
&=&
\der_{t_l} +\frac1\ka \phi_l, \qquad l=1,\dots, r,
\\
\nabla^o_{z_m}
&=&
\der_{z_m}+ \frac1\ka \psi_m, \qquad m=1,\dots, n,
\\
\nabla^o_{\la}
&=&
\der_{\la}+\frac1\ka \psi_{n+1},
\eea
where
\bean
\label{kzdo}
\phi_l
&=&
-\la + \sum_{m=1}^n \frac{-\La_m}{t_l-z_m} + \sum_{j\ne l} \frac 2{t_l-t_j}\,, \quad l=1,\dots,r\,,
\\
\notag
\psi_m
&=&
\frac \la 2 + \sum_{l=1}^r \frac {-\La_m}{z_m-t_l}+ \sum_{i\ne m} \frac{\La_m\La_i/2}{z_m-z_i} \,, \quad
m=1,\dots,n\,,
\\
\notag
\psi_{n+1}
&=&
\frac 1 2\, \sum_{m=1}^nz_m - \sum_{l=1}^r t_l\,.
\eean
The connection $\nabla^o$ has singularities at the poles of the functions $(\phi_l,\psi_m)$.

\begin{lem}
The connection $\nabla^o$ is flat.
\end{lem}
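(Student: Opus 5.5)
Flatness of $\nabla^o$ means that the operators $\nabla^o_{t_l}$, $\nabla^o_{z_m}$, $\nabla^o_\la$ pairwise commute. Each has the form $\der_x+\frac1\ka\chi$ with $\chi$ scalar, so
\[
\Bigl[\der_x+\tfrac1\ka\chi,\ \der_y+\tfrac1\ka\chi'\Bigr]=\tfrac1\ka\bigl(\der_x\chi'-\der_y\chi\bigr).
\]
Flatness is therefore equivalent to the closedness conditions $\der_x\chi'=\der_y\chi$ for every pair of coordinates $x,y\in\{t_1,\dots,t_r,z_1,\dots,z_n,\la\}$. I would verify these directly from \eqref{kzdo}, splitting into cases according to the pair of variables.

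The pairs not involving $\la$ reduce to the KZ case. The functions $\phi_l,\psi_m$ in \eqref{kzdo} differ from those in \eqref{kzoo} by the terms $-\la$ and $\la/2$, which depend only on $\la$. These terms are annihilated by $\der_{t_j}$ and $\der_{z_i}$, so the cross-derivative identities among $(\phi_l,\psi_m)$ are exactly those of the flat connection \eqref{kzoo}. That flatness was already established in Section \ref{sec 3}. For completeness, the identities are elementary: $\der_{t_j}\phi_l=-2/(t_l-t_j)^2=\der_{t_l}\phi_j$, $\der_{z_m}\phi_l=\La_m/(t_l-z_m)^2=\der_{t_l}\psi_m$, and $\der_{z_i}\psi_m=\La_m\La_i/(2(z_m-z_i)^2)=\der_{z_m}\psi_i$.

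The pairs involving $\la$ are new but short. We have $\der_\la\phi_l=-1$ and $\der_{t_l}\psi_{n+1}=-1$. Similarly, $\der_\la\psi_m=1/2$ and $\der_{z_m}\psi_{n+1}=1/2$. So all cross derivatives agree, and this completes the inspection.

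The only point needing care is confirming that the $\la$-terms do not disturb the old identities and that their signs and coefficients match those of $\psi_{n+1}$; I do not expect a real obstacle. Alternatively, one could observe that $\nabla^o$ is formally $\Phi^{-1}\circ\der\circ\Phi$, with
\[
\Phi=e^{\la(\sum_m z_m/2-\sum_l t_l)/\ka}\,\Phi_{KZ},
\]
which gives flatness at once. This argument is formal in characteristic $p$, however, so I would present the direct verification.
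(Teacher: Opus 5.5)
Your proposal is correct and takes the same approach as the paper, whose proof is just ``by inspection'': your pairwise check of $\der_x\chi'=\der_y\chi$ (the $(t,z)$-pairs as in the KZ case, plus $\der_\la\phi_l=-1=\der_{t_l}\psi_{n+1}$ and $\der_\la\psi_m=\tfrac12=\der_{z_m}\psi_{n+1}$) is that inspection written out, and it also covers the Section 3 version of the lemma. One minor slip: two of your common values have the wrong sign — in fact $\der_{t_j}\phi_l=\der_{t_l}\phi_j=2/(t_l-t_j)^2$ and $\der_{z_m}\phi_l=\der_{t_l}\psi_m=-\La_m/(t_l-z_m)^2$ — but both sides of each identity still agree, so the conclusion stands.
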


\begin{proof}

The proof is by inspection.
\end{proof}

\begin{lem}
The $p$-curvature operators of $\nabla^o$ are given by the formulas:
\bea
(\nabla_{t_l}^o)^p
&=&
\frac 1{\ka^p}\tilde \phi_l,\qquad l=1,\dots,r,
\\
(\nabla_{z_m}^o)^p
&=&
\frac 1{\ka^p}\tilde \phi_l, \qquad m=1,\dots,n+1,
\eea
where
\bea
\tilde \phi_l
&=&
-\la^p + \sum_{m=1}^n \frac{-h_p(\La_m)}{t_l^p-z_m^p} + \sum_{j\ne l} \frac {h_p(2)}{t_l^p-t_j^p}
\,, \quad l=1,\dots,r\,,
\\
\tilde \psi_m
&=&
 \frac {\la^p} 2 + \sum_{l=1}^r \frac {-h_p(\La_m)}{z_m^p-t_l^p}+ \sum_{i\ne m} \frac{h_p(\La_m\La_i/2)}{z_m^p-z_i^p}\,, \quad
m=1,\dots,n\,,
\\
\tilde \psi_{n+1}
&=&
 \frac 1 2\, \sum_{m=1}^nz_m^p - \sum_{l=1}^r t_l^p\,.
\eea

\end{lem}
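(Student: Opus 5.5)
The plan is to apply formula \eqref{ex1} to each operator of $\nabla^o$ separately, exactly as in Lemma \ref{lem: tilde}. For a connection $\der_x+f$ with $f$ a rational function, $(\der_x+f)^p=f^p+\der_x^{p-1}f$. With $f=\frac1\ka\phi_l$ we get
\[
(\nabla^o_{t_l})^p=\frac1{\ka^p}\Bigl(\phi_l^p+\ka^{p-1}\der_{t_l}^{p-1}\phi_l\Bigr).
\]
The same holds for $\nabla^o_{z_m}$ and $\nabla^o_\la$. So the task reduces to computing $\phi^p+\ka^{p-1}\der^{p-1}\phi$ term by term for the functions in \eqref{kzdo}. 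Both operations are additive in characteristic $p$: the Frobenius is additive, and $\der^{p-1}$ is linear. Hence each summand of $\phi_l$, $\psi_m$, $\psi_{n+1}$ can be treated independently.

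There are three types of summands.

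First, a simple pole $\frac{a}{x-b}$ in the differentiated variable $x$, where $b$ may depend on the other variables. Since $\der_x^{p-1}(x-b)^{-1}=(p-1)!\,(-1)^{p-1}(x-b)^{-p}=-(x-b)^{-p}$ by Wilson's theorem, we get
\[
\Bigl(\frac a{x-b}\Bigr)^p+\ka^{p-1}\der_x^{p-1}\frac a{x-b}=\frac{a^p-\ka^{p-1}a}{x^p-b^p}=\frac{h_p(a)}{x^p-b^p}.
\]
For $a=-\La_m$ we use $h_p(-\La_m)=-h_p(\La_m)$, which holds because $p$ is odd. This produces the pole terms with $h_p(\La_m)$, $h_p(2)$ and $h_p(\La_m\La_i/2)$.

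Second, a summand independent of the differentiated variable, such as $-\la$ in $\phi_l$ or $\la/2$ in $\psi_m$. Its derivative vanishes, so only the $p$-th power survives, giving $-\la^p$ and $(\la/2)^p$.

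Third, for $\psi_{n+1}=\frac12\sum z_m-\sum t_l$, which does not depend on $\la$, the result is $\bigl(\frac12\sum z_m-\sum t_l\bigr)^p=\bigl(\tfrac12\bigr)^p\sum z_m^p-\sum t_l^p$, again by additivity of Frobenius.

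The one point needing care is the constants: $(1/2)^p=1/2$ only when $2\in\F_p$, which holds automatically since $p$ is odd, because $1/2\in\F_p\subset\K$. So $(\la/2)^p=\la^p/2$, and $\tilde\psi_{n+1}$ has coefficient $\frac12$. The constant $-\la$ term in $\phi_l$ gives $-\la^p$ because $(-1)^p=-1$.

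Finally, the Wilson-theorem sign computation in the first case must be checked so that the sign convention $h_p(y)=y^p-\ka^{p-1}y$ comes out exactly. I expect this to be the main (though mild) obstacle. Collecting all the terms gives the stated formulas. As noted in the statement, the second display should read $(\nabla^o_{z_m})^p=\ka^{-p}\tilde\psi_m$ for $m\le n$, with $\nabla^o_{z_{n+1}}$ understood as $\nabla^o_\la$.
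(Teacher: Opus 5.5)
Your proposal is correct and takes the same route as the paper. The paper gives no separate proof of this lemma; it treats it exactly like the analogous lemma in Section~3, which "follows from formula \eqref{ex1}". Your termwise computation, with Wilson's theorem for the simple poles, Frobenius for the $\la$-terms and for $\psi_{n+1}$, and $h_p(-y)=-h_p(y)$ for odd $p$, is precisely that verification, and you correctly read the second display as $\ka^{-p}\tilde\psi_m$, with $\nabla^o_\la$ in the $(n+1)$-st slot.
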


\subsection{Weight function}

Define the $V$-valued weight function
\bea
w(t,z)\,=\>\sum_{\vec r\in \mc I_r} w_{\vec r}(t;z)\,f^{\<\>\vec r}
\eea
by formula \eqref{Wkz}.

\subsection{Integral representation}

\begin{thm}
[\cite{FMTV}]
\label{thm ir}

There exist $V$-valued rational functions $g_{l,m}(t,z,\lambda)$, for $l=1,$ \dots, $r$ and $m=1,\dots,n+1$, such that these functions, together with the rank-one connection $\nabla^o$ defined by \eqref{kzdo}
and the weight function $w$ defined by \eqref{Wkz},
give an integral representation of the compatible KZ and dynamical connections
$\nabla$ defined by \eqref{kzd}.
That is,
\bean
\label{IRR}
\left(\nabla^o_{z_m} +\frac 1\ka H_m\right)\!(w)
&=&
\sum_{l=1}^r \nabla^o_{t_l}\left(g_{l,m}\right), \qquad m=1,\dots,n,
\\
\notag
\left(\nabla^o_{\la} + \frac 1\ka D\right)\!(w)
&=&
\sum_{l=1}^r \nabla^o_{t_l}\left(g_{l,n+1}\right).
\eean

\end{thm}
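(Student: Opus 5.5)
The plan is to realize the identities \eqref{IRR} as identities between explicit rational functions, writing down the $g_{l,m}$ by hand. Every coefficient involved ($\La_i$, $\la$, $\ka^{-1}$, $1/2$, and the structure constants of $\slt$) lies in a ring in which $2$ is invertible, which is fine since $p$ is odd. So it suffices to construct $g_{l,m}$ with coefficients in $\Z[\frac12][\La,\ka^{\pm1}]$ and to check the identities there. The verification is then purely algebraic and transfers verbatim to $\K$.

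\textbf{KZ components.} For $m=1,\dots,n$ I would start from the functions $g^{KZ}_{l,m}$ of Theorem \ref{thm ri} (explicitly \cite[Theorem 2.2]{EV1}). Compared with Section \ref{sec 3}, the new $\la$-dependent contributions to the two sides of \eqref{IRR} are:
\begin{itemize}
\item on the left, from $\psi_m$ and from the term $-\frac\la2 h^{(m)}$ in $H_m$, where $h^{(m)}$ acts on $f^{\vec r}$ by $\La_m-2r_m$;
\item on the right, from the term $-\la$ in $\phi_l$, which contributes $-\frac\la\ka\sum_l g^{KZ}_{l,m}$.
\end{itemize}
Recall the shape of $g^{KZ}_{l,m}$: its $f^{\vec r}$-coefficient is, up to sign, the part of $w_{\vec r}$ in which the variable $t_l$ is attached to $z_m$. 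Summing over $l$ therefore reproduces $r_m\,w_{\vec r}$ componentwise. I expect the $\la$-terms to match exactly with $g_{l,m}=g^{KZ}_{l,m}$. If a residual term proportional to $\la\, w$ remains, I would absorb it by adding a correction of the form $c\,\ka\, t$-independent multiple of $w$ into one $g_{l,m}$, using that $\nabla^o_{t_l}$ applied to such a term produces the scalar $-\la/\ka$.

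\textbf{Dynamical component.} Here $\der_\la w=0$, so the left side is $\frac1\ka(\psi_{n+1}+D)w$. I would split $D$ into two pieces.
\begin{itemize}
\item The nilpotent piece is $-\frac1\la\sum_{i,j}f^{(i)}e^{(j)}=-\frac1\la\, f\,e$, with $f,e$ acting diagonally. By Theorem \ref{thm sing}, $e\,w=\sum_l\nabla^o_{t_l}(h_l)$. Since $f$ commutes with $\nabla^o_{t_l}$, this piece equals $\sum_l\nabla^o_{t_l}\bigl(-\frac1{\ka\la}f h_l\bigr)$.
\item The diagonal piece combines with $\psi_{n+1}$. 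On $f^{\vec r}$ it gives
\[
\frac12\sum_i z_i(1-\La_i+2r_i)-\sum_l t_l ,
\]
up to the stated normalizations. I would reduce this to an expression of the form $\sum_l\bigl(z_{s(l)}-t_l\bigr)$ plus terms killed by weight-function identities. Here $s(l)$ denotes the point to which $t_l$ is attached in a given term of the symmetrization \eqref{Wkz}.
\end{itemize}
Each factor $(t_l-z_{s(l)})\cdot\frac1{t_l-z_{s(l)}}$ collapses to a function independent of $t_l$. The key trick for this piece uses the $-\la/\ka$ in $\phi_l$: for a $t_l$-independent $u$, one has $\nabla^o_{t_l}\bigl(-\frac{\ka}{\la}u\bigr)=u+\frac1\la(\text{terms with simple poles in }t_l)$. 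These leftover terms should in turn be expressed, order by order in $1/\la$, via the KZ-type identities of \cite{SV1}, giving $g_{l,n+1}$ as a finite expression in $1/\la$.

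\textbf{Main obstacle.} I expect the main obstacle to be this last bookkeeping in the dynamical component: showing that the diagonal terms $\sum z_i r_i-\sum t_l$, together with the leftover pole terms, are exactly $\nabla^o_{t_l}$-exact with no remainder. My first attempt would be the case $r=1$, where $w=\sum_j\frac{f^{(j)}}{t-z_j}$ and everything can be checked by partial fractions. I would then do the general case by symmetrization, since all operations commute with $\Sym_{t_1,\dots,t_r}$.
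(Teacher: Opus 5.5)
The dynamical component does not close as you set it up. Theorem \ref{thm sing} is an identity for the rank-one connection \eqref{kzoo}. For \eqref{kzdo} one has $\nabla^o_{t_l}=\nabla^{o,\mathrm{KZ}}_{t_l}-\la/\ka$, so
\[
-\frac1{\ka\la}\,f\,e\,w\;=\;\sum_{l=1}^r\nabla^o_{t_l}\Bigl(-\frac1{\ka\la}\,f\,h_l\Bigr)\;-\;\frac1{\ka^2}\,f\sum_{l=1}^r h_l ,
\]
and the nilpotent piece is \emph{not} exact. You are then forced to show that the diagonal piece is exact by itself, which is false. For $r=1$, with the normalization of the next paragraph, it equals $-\frac1\ka\sum_i f^{(i)}v$, where $v=v_{\La_1}\ox\dots\ox v_{\La_n}$. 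A nonzero $t$-independent $u$ is never $\nabla^o_t(g)$ with $g$ regular off $\{t=z_j\}$:
\begin{itemize}
\item for generic $\La_j/\ka$, a pole of $g$ at $z_j$ cannot cancel, so $g$ is polynomial in $t$;
\item the term $-\frac\la\ka g$ then forces $\deg_t g=0$;
\item the residues $-\frac{\La_j}\ka g$ at $t=z_j$ force $g=0$.
\end{itemize}

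Your $1/\la$ iteration therefore cycles. One application of your trick leaves exactly $\frac1{\ka\la}f\,e\,w$, i.e.\ minus the nilpotent piece. Rewriting that via the KZ form of Theorem \ref{thm sing} reproduces the diagonal term. The missing idea is that the two non-exact remainders cancel each other. One may take $h_l=-\ka\,w(t_1,\dots,\widehat{t_l},\dots,t_r)$, the weight function of weight $\sum_j\La_j-2r+2$ in the variables other than $t_l$. On $f^{\vec r}$ the diagonal piece is $\frac1\ka\bigl(\sum_i r_iz_i-\sum_l t_l\bigr)w_{\vec r}=-\frac1\ka\sum_l\bigl(f\,w(t_1,\dots,\widehat{t_l},\dots,t_r)\bigr)_{\vec r}$, which is exactly $\frac1{\ka^2}f\sum_l h_l$. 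Hence $g_{l,n+1}=\frac1\la\,f\,w(t_1,\dots,\widehat{t_l},\dots,t_r)$.

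Here ``exact'' must mean exact with $g$ whose denominators involve only $t_l-z_m$, $t_i-t_j$, $z_i-z_j$ and $\la$. In characteristic $p$ bare existence is trivial: $\nabla^o_{t_1}$ is invertible on $V\ox\K(t,z,\la)$ with inverse $\ka^p\tilde\phi_1^{-1}(\nabla^o_{t_1})^{p-1}$. But those $g$'s have poles along $\tilde\phi_1=0$, i.e.\ at the Bethe points, and are useless for Theorem \ref{thm eig}. Your lifting strategy, which is what the paper's bare citation of \cite{FMTV} amounts to, is the right one precisely because it yields such regular $g$'s.

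Second, as printed, $\psi_m$ and $\psi_{n+1}$ in \eqref{kzdo} lack the factors $\La_m$. The master function should contain $e^{\la(\sum_m\La_m z_m/2-\sum_l t_l)}$, i.e.\ $\psi_m=\frac{\la\La_m}2+\dots$ and $\psi_{n+1}=\frac12\sum_m\La_m z_m-\sum_l t_l$. With these, the $\la$-terms of the KZ components are $\frac{\la r_m}\ka\,w$ and match $-\frac\la\ka\sum_l g^{\mathrm{KZ}}_{l,m}$ exactly, as you expected.

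With the printed formulas you get residuals $\frac{\la(1-\La_m)}{2\ka}\,w$ and $\frac1{2\ka}\sum_i(1-\La_i)z_i\,w$, and your patch does not remove them. Since $w$ depends on $t_l$, $\nabla^o_{t_l}(c\ka w)=-c\la w+c\ka\nabla^{o,\mathrm{KZ}}_{t_l}(w)$ creates a new nonzero term. In fact these residuals are not exact with regular $g$: for $r=n=1$ the residual is a nonzero multiple of $\frac1{t-z}$, which the same local argument excludes. So the normalization has to be corrected, not absorbed.
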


See formulas for $g_{l,m}$ in \cite[Theorem 2.2]{EV1} and \cite{SV1, FMTV}.
\vsk.2>
By Theorem \ref{thm ir}, we can apply Theorem \ref{thm eig} to the connections $\nabla$ and $\nabla^o$
and the weight function $w$.

Given $(t^0,z^0, \la^0)$, consider the ideal
\bea
\mathcal{J}_p = \langle (t_1-t_1^0)^p, \dots, (t_r-t_r^0)^p, (z_1-z_1^0)^p, \dots, (z_n-z_n^0)^p, (\la-\la^0)^p \rangle \ \subset \ \K[t,z,\la]
\eea
and the quotient $A_{(t^0,z^0,\la^0)} = \K[t,z,\la]\big/ \mc J_p$.
The connection $\nabla$ induces a connection on $V\ox A_{(t^0,z^0,\la^0)}$,
and the connection $\nabla^o$ induces a connection on $A_{(t^0,z^0,\la^0)}$.
We keep the notation $\nabla$ and $\nabla^o$ for the induced connections.

\vsk.2>
Let $\Delta = \prod_{j=1}^r \left(\nabla^o_{t_j}\right)^{p-1}$.
Define an element $I\in V \ox A_{(t^0,z^0, \la^0)}$
by the formula
\bea
I = \Delta(w).
\eea

\begin{thm}

Let $(t^0, z^0, \la^0)$ be a solution to the system of Bethe ansatz equations:
\bean
\label{bbea}
-\la^p + \sum_{m=1}^n \frac{-h_p(\La_m)}{t_l^p-z_m^p} + \sum_{j\ne l} \frac {h_p(2)}{t_l^p-t_j^p}\,=\,0
\,, \quad l=1,\dots,r\,.
\eean
Then the element $I \in V\ox A_{(t^0,z^0,\la^0)}$ is a flat section of the extended connection. That is,
\bea
\left(\der_{z_m}+ \frac 1\ka\psi_m +\frac1\ka H_m\right) I
&=&
0,\qquad m=1,\dots, n,
\\
\left(\der_{\la}+ \frac 1\ka \psi_{n+1} +\frac 1\ka D\right) I
&=&
0,
\\
\left(\der_{t_l}+ \frac 1\ka\phi_l\right) I
&=&
0,\qquad l=1,\dots, r.
\eea
Furthermore, the element $I$ is an eigensection of the $p$-curvature operators $(C_m)$
of the connection $\nabla$
with respective eigenvalues $(-\frac 1\ka \tilde \psi_m(t^0,z^0,\la^0))$. That is,
\bean
\label{igvV}
C_m\,I \,
&=&\, - \frac 1{\ka^p}
\left( \frac {\left(\la^0\right)^p} 2 + \sum_{l=1}^r \frac {-h_p(\La_m)}{(z_m^0)^p-(t_l^0)^p}+ \sum_{i\ne m}
\frac{h_p(\La_m\La_i/2)}{(z_m^0)^p-(z_i^0)^p} \right) I, \ m=1,\dots,n,
\\
\label{Dc}
\phantom{aaaa}
C_{n+1}\,I \,&=&\,
- \frac 1{\ka^p} \left(\frac 1 2\, \sum_{m=1}^n(z_m^0)^p - \sum_{l=1}^r (t_l^0)^p\right) I\,.
\eean

\end{thm}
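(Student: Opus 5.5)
This theorem follows from Theorem \ref{thm eig}, applied with base $\A^{n+1}$ carrying coordinates $(z,\la)$, so that $\la$ plays the role of the $(n+1)$-st base coordinate. Concretely, I would take the abstract data of Section \ref{sec 2} to be
\[
H_m \mapsto \tfrac1\ka H_m\ (m\le n),\qquad H_{n+1}\mapsto \tfrac1\ka D,\qquad \phi_l\mapsto \tfrac1\ka\phi_l,\qquad \psi_m\mapsto\tfrac1\ka\psi_m\ (m\le n+1).
\]
The hypotheses of Theorem \ref{thm eig} then need to be checked one by one:
\begin{itemize}
\item $\nabla$ is flat by \cite{FMTV};
\item $\nabla^o$ is flat by the lemma just proved;
\item the integral representation \eqref{IR} holds for all $n+1$ base directions by Theorem \ref{thm ir}.
\end{itemize}
Lemmas \ref{lem: tilde} and \ref{lem Sto} were stated for a general number of base variables, so they apply verbatim with $n$ replaced by $n+1$.

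Next I would identify the abstract $p$-curvatures of $\nabla^o$ with the explicit ones. By \eqref{ex1}, $(\der_x+f)^p=f^p+\der_x^{p-1}f$. For the polar terms $\frac{a}{\ka(x-b)}$, the computation of the previous section produces $h_p(a)/(\ka^p(x^p-b^p))$, using $\der_x^{p-1}(x-b)^{-1}=(p-1)!\,(x-b)^{-p}=-(x-b)^{-p}$. The new linear terms contribute separately:
\begin{itemize}
\item $-\la/\ka$ in $\phi_l$ and $\la/(2\ka)$ in $\psi_m$ do not depend on $t_l$ or $z_m$, so they contribute only their $p$-th powers $-\la^p/\ka^p$ and $\la^p/(2^p\ka^p)=\la^p/(2\ka^p)$, since $2^p=2$ in $\K$;
\item in $\psi_{n+1}$, which is linear in the remaining variables and constant in $\la$, we get $(\frac12\sum z_m-\sum t_l)^p/\ka^p=(\frac12\sum z_m^p-\sum t_l^p)/\ka^p$ by Frobenius additivity, again using $2^{-p}=2^{-1}$;
\item cross terms are absent because the $p$-th power in characteristic $p$ is additive.
\end{itemize}
Thus the $p$-curvatures are $\ka^{-p}$ times the displayed $\tilde\phi_l,\tilde\psi_m,\tilde\psi_{n+1}$, and the Bethe system \eqref{BEA} becomes exactly \eqref{bbea}, since the factor $\ka^{-p}$ is nonzero.

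With this in place, Theorem \ref{thm eig} gives the three flatness statements directly. It also gives $C_mI=-\ka^{-p}\tilde\psi_m(t^0,z^0,\la^0)I$ for $m=1,\dots,n+1$, which are \eqref{igvV} and \eqref{Dc}. Here $C_{n+1}=(\der_\la+\frac1\ka D)^p$, and I would note that the statement's phrase ``eigenvalue $-\frac1\ka\tilde\psi_m$'' should be read as $-\ka^{-p}\tilde\psi_m$, as in the displayed formulas.

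The main obstacle is bookkeeping rather than substance, and it sits in the constancy step. One must confirm that each $\tilde\psi$ is a function of $(t^p,z^p,\la^p)$, so that it reduces to its value at $(t^0,z^0,\la^0)$ in $A_{(t^0,z^0,\la^0)}$; this holds for $\tilde\psi_{n+1}$ by the explicit formula. The other point to confirm is the Frobenius-linearity of the $\la$ and linear terms, so that they enter additively in the $p$-curvature without correction terms. Both follow from Lemma \ref{lem: tilde} together with the explicit formulas above.
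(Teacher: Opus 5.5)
Your proposal is correct and follows the paper's route. The paper proves this theorem simply by invoking Theorem \ref{thm eig} with $\la$ as the $(n+1)$-st base coordinate, with the integral representation supplied by Theorem \ref{thm ir}; your explicit computation of the $p$-curvatures of $\nabla^o$ (the $\la^p$ and Frobenius-linear terms, and reading the eigenvalue as $-\ka^{-p}\tilde\psi_m$) fills in details the paper leaves implicit.
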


\begin{cor}
\label{cor last}

Let $(t^0, z^0, \la^0)$ be a solutions to the system of Bethe ansatz equations \eqref{bbea}.
Then the vector $I(t^0,z^0,\la^0) \in V$ is an eigenvector of the $p$-curvature operators $(C_m(z^0,\la^0))$
of the connection $\nabla$
with respective eigenvalues $(-\frac 1\ka\tilde \psi_m(t^0,z^0,\la^0))$. That is,
\bean
\label{igv l}
&&
C_m(z^0,\la^0)\,I(t^0,z^0,\la^0) \,
\\
\notag
&&
= \, - \frac 1{\ka^p}
\left( \frac {\left(\la^0\right)^p} 2 + \sum_{l=1}^r \frac {-h_p(\La_m)}{(z_m^0)^p-(t_l^0)^p}
+ \sum_{i\ne m}
\frac{h_p(\La_m\La_i/2)}{(z_m^0)^p-(z_i^0)^p} \right) I(t^0,z^0,\la^0), \ m=1,\dots,n,
\\
\label{Dc l}
\phantom{aaa}
&&
C_{n+1}(z^0,\la^0)\,I(t^0,z^0,\la^0)
\\
\notag
&& \,=\,
- \frac 1{\ka^p} \left(\frac 1 2\, \sum_{m=1}^n(z_m^0)^p - \sum_{l=1}^r (t_l^0)^p\right) I(t^0,z^0,\la^0)\,.
\eean

\end{cor}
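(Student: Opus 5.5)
The corollary follows from the preceding theorem for the compatible KZ and dynamical connections by evaluating at the point $(t^0,z^0,\la^0)$. So the plan is to establish that theorem, which is itself an instance of Theorem \ref{thm eig}, and then specialize.

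First, I would check that the data of Section \ref{sec 5} fit the framework of Section \ref{sec 2} with base $\A^{r+n+1}$ and $n+1$ "$z$-type" coordinates $(z_1,\dots,z_n,\la)$. The connection \eqref{kzd} is flat by \cite{FMTV}, and the rank-one connection $\nabla^o$ given by \eqref{kzdo} is flat. Theorem \ref{thm ir} supplies the integral representation \eqref{IR}, after rescaling $\phi_l,\psi_m,H_m,D$ by $1/\ka$. Its $p$-curvatures are $\ka^{-p}\tilde\phi_l$ and $\ka^{-p}\tilde\psi_m$ for $m=1,\dots,n+1$, which one computes via \eqref{ex1}. Two observations make this computation easy:
\begin{enumerate}
\item the terms $-\la$, $\la/2$ and $\psi_{n+1}$ are linear in a variable not being differentiated, or are constant in the differentiated variable, so their $(p-1)$-th derivatives vanish and only their $p$-th powers survive;
\item Lemma \ref{lem: tilde} guarantees that every $\tilde\phi_l$ and $\tilde\psi_m$ is a function of $(t^p,z^p,\la^p)$.
\end{enumerate}
Consequently the Bethe equations \eqref{bbea} are exactly $\tilde\phi_l(t^0,z^0,\la^0)=0$.

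Next, I would apply Theorem \ref{thm eig} in the Frobenius neighborhood $A_{(t^0,z^0,\la^0)}$. By Lemma \ref{lem Sto}, $\Delta\nabla^o_{t_l}$ vanishes there. Hence applying $\Delta$ to \eqref{IRR} gives flatness of $I$ in the $z_m$ directions and in the $\la$ direction, and $\nabla^o_{t_l}I=0$. Raising the $(n+1)$ extended operators to the $p$-th power and using \eqref{C-rel} gives
\[
\bigl(C_m+\ka^{-p}\tilde\psi_m\bigr)I=0 .
\]
Here $\tilde\psi_m$ reduces to the constant $\tilde\psi_m(t^0,z^0,\la^0)$ because it depends only on $p$-th powers. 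This yields \eqref{igvV} and \eqref{Dc}.

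Finally, for the corollary itself, I would reduce modulo the maximal ideal of $A_{(t^0,z^0,\la^0)}$, that is, evaluate at $(t,z,\la)=(t^0,z^0,\la^0)$. Evaluation is a ring homomorphism $A_{(t^0,z^0,\la^0)}\to\K$. Moreover $C_m$ acts $A$-linearly, since it is an $\End V$-valued function (of $z,\la$ only) that commutes with multiplication by functions. Therefore evaluating the identity $C_mI=c_mI$ gives
\[
C_m(z^0,\la^0)\,I(t^0,z^0,\la^0)=c_m\,I(t^0,z^0,\la^0),
\]
with $c_m$ the stated constants. This is exactly \eqref{igv l} and \eqref{Dc l}.

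The only point needing care is that evaluation commutes with applying $C_m$. This holds because $C_m$ is a multiplication operator by a matrix of rational functions with no pole at $(z^0,\la^0)$; we assume the point avoids the singularities. No derivatives are involved, unlike with $\nabla_m$ itself, so the step is straightforward. The main obstacle, if any, is the bookkeeping of the $\ka$ normalizations: the eigenvalue written as $\frac1\ka\tilde\psi_m$ should really read $\ka^{-p}\tilde\psi_m$, as in the displayed formulas.
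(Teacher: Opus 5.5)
Your proposal is correct and follows the paper's (implicit) argument: the corollary is the evaluation at $(t^0,z^0,\la^0)$ of the preceding theorem, which is Theorem \ref{thm eig} applied to the integral representation of Theorem \ref{thm ir}, exactly as you describe. Your remark on normalization is also right: in the prose of the statement the eigenvalue should read $-\ka^{-p}\tilde\psi_m(t^0,z^0,\la^0)$, not $-\frac1\ka\tilde\psi_m(t^0,z^0,\la^0)$, which is consistent with the displayed formulas.
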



\bigskip
\begin{thebibliography}{[COGP]}
\normalsize
\frenchspacing
\raggedbottom

\bi[AO]{AO}
M.~Aganagic, A.~Okounkov, A.,
{\it Quasimap counts and Bethe eigenfunctions},
Mosc. Math. J. 17 (2017), no. 4, 565--600

\bi[EFK]{EFK}
P.~Etingof, I.~Frenkel, A.A.~Kirillov Jr.,
{\it Lectures on representation theory and Knizhnik--Zamolodchikov
equations}, Math. Surveys Monogr., Vol. 58, American Mathematical Society, Providence, RI, 1998.

\bi[EV1]{EV1} P.~Etingof, A.~Varchenko, {\it
Solutions modulo $p^s$ of the differential KZ and dynamical equations},
SIGMA 19 (2023), 061, 16 pages,
{\tt \url{https://doi.org/10.3842/SIGMA.2023.061}},
{\tt arXiv:2304.07843}

\bi[EV2]{EV2} P.~Etingof, A.~Varchenko, {\it $p$-curvature of periodic pencils of flat connections},
\\
{\tt \url{https://arxiv.org/abs/2401.05652}}, 1--30

\bi[FMTV]{FMTV} G. Felder, Y. Markov, V. Tarasov, and A. Varchenko, {\it
Differential Equations Compatible with KZ Equations}, Journal of Math. Phys., Analysis and
Geometry, 3 (2000), 139--177

%\bi[FR]{FR}
%I.B.~Frenkel, N.Yu.~Reshetikhin, {\it
%Quantum affine algebras and holonomic difference equations}, Comm. Math.
%Phys. 146 (1992), 1--60

\bibitem[K1]{K1} N. Katz, Algebraic solutions of differential equations (p-curvature and the Hodge filtration), Invent. Math. 18 (1972), p. 1--118.

\bibitem[K2]{K2} N. Katz, Nilpotent connections and the monodromy theorem :
applications of a result of Turrittin,
Publications math\'ematiques de l'I.H.\'E.S., tome 39 (1970), p. 175--232.

\bi[KZ]{KZ} V.\,Knizhnik and A.\,Zamolodchikov, {\it
Current algebra and the Wess-Zumino model
in two dimensions}, Nucl. Phys. {\bf B247} (1984), 83--103

%\bi[KS]{KS} P.\,Koroteev, A.\,Smirnov, {\it On the Quantum K-theory of Quiver Varieties at Roots of Unity}, {\tt arXiv:2412.19383}

\bi[MO]{MO}
D.~Maulik and A.~Okounkov, {\it Quantum groups and quantum cohomology}, Astérisque, (408):ix+209, 2019

\bi[MV1]{MV1}
E.~Mukhin, A.~Varchenko, {\it Solutions of the $\frak{sl}_2$ qKZ equations modulo an integer},
J. London Math. Soc. (2), {\bf 109} (2024), no. 4, Paper No. e12884, 22p,
\\
{\tt \url{https://doi.org/10.1112/jlms.12884}}

\bi[MV2]{MV2}
E.~Mukhin, A.~Varchenko,
{\it Finding all solutions of qKZ equations in characteristic $p$},
{\tt arXiv:2503.06326}, 1--23, SIGMA 22 (2026), 001, 19 pages,
\\
{\tt \url{https://doi.org/10.3842/SIGMA.2026.001}}

%\bi[MTT]{MTT} T.\,Miwa, Y.\,Takiyama, V.\,Tarasov,
%{\it Determinant Formula for Solutions of
%the $U_q(\frak{sl}_n)$ qKZ Equation at $|q| = 1$}

%\bi[O]{O} Okounkov, A., {\it Lectures on K-theoretic computations in enumerative geometry},
%IAS/Park City Math. Ser., 24
%American Mathematical Society, Providence, RI, 2017, 251--380. ISBN: 978-1-4704-3574-5

%\bi[OS]{OS}
%Okounkov, A.; Smirnov, A. {\it
%Quantum difference equation for Nakajima varieties},
%Invent. Math. 229 (2022), no. 3, 1203--1299

%\bi[S]{S} F.~Smirnov, {Form factors in completely integrable models of quantum field theory},
%Advanced Series in Math. Phys., vol. 14, World Scientific, Singapore, 1992

\bi[SV1]{SV1} V.~Schechtman and A.~Varchenko, {\it
Arrangements of Hyperplanes and Lie Algebra Homology},
Invent. Math. Vol. 106 (1991), 139--194

\bi[SV2]{SV2} V.~Schechtman, A.~Varchenko,
{\it
Solutions of KZ differential equations modulo $p$},
The Ramanujan Journal,
{\bf 48} (3), 2019, 655--683,
{\tt \url{https://doi.org/10.1007/s11139-018-0068-x}}

%\bi[TV1]{TV1}
%V.~Tarasov and A.~Varchenko, {\it
%Jackson integral representations for solutions of the Knizhnik-Zamolodchikov quantum equation},
%St. Petersburg Math. J. 6 (1995), no. 2, 275--313

%\bi[TV2]{TV2}
%V.~Tarasov and A.~Varchenko,
%{\it
%Geometry of q-hypergeometric functions, quantum affine algebras and elliptic quantum groups},
%Asterisque, v 246 (1997), 1--135

\bi[TV1]{TV1}
V.~Tarasov and A.~Varchenko,
{\it Monodromy Eigenvectors for Difference Equations with Root-of-Unity Step}, 
{\tt arXiv:2609.17742}, 1--33

\bi[TV2]{TV2}
V.~Tarasov and A.~Varchenko,
{\it Eigenvectors and Eigenvalues of $p$-Curvature Operators for Geometric Difference Equations},
{\tt arXiv:2609.21090}, 1--31

\bi[VV1]{VV1}
A.~Varchenko, V.~Vologodsky,
{\it Finding all solutions to the KZ equations in characteristic $p$},
{\tt arXiv:2405.05159}, 1--43

\bi[VV2]{VV2} A.~Varchenko, V.~Vologodsky,
{\it On $p$-adic solutions to KZ equations, ordinary crystals,
and $p^s$-hypergeometric solutions},
{\tt arXiv:2406.19318}, 1--10, accepted for publication in Journal of Algebra \& Number Theory

\end{thebibliography}
\end{document}